\definecolor{myurlcolor}{rgb}{0,0,0.4}
\definecolor{mycitecolor}{rgb}{0,0.5,0}
\definecolor{myrefcolor}{rgb}{0.5,0,0}
\newtheorem{Remark}{Remark}
\newtheorem{Proposition}{Proposition}
\newtheorem*{proof*}{Proof}
\newcommand{\be}{\begin{equation}}
\newcommand{\ee}{\end{equation}}
\newcommand{\bea}{\begin{eqnarray}}
\newcommand{\eea}{\end{eqnarray}}
\newcommand{\grit}[1]{{\bfseries {\itshape {#1}}}}
\newcommand{\blue}[1]{\textcolor{blue}{{#1}}}
\newcommand{\lra}{\longrightarrow}
\newcommand{\hh}{\mathcal{H}}
\newcommand{\bh}{\mathcal{B}(\mathcal{H})}
\newcommand{\Uh}{\mathcal{U}(\mathcal{H})}
\newcommand{\uh}{\mathfrak{u}(\mathcal{H})}
\newcommand{\Tr}{\mathrm{Tr}}
\newcommand{\stsp}{\mathscr{S}}
\newcommand{\appa}{\mathscr{A}}
\newcommand{\appas}{\mathscr{A}_{sa}}
\newcommand{\gapp}{\mathscr{G}}
\newcommand{\stav}{\mathscr{V}}
\newcommand{\pos}{\mathscr{P}}
\newcommand{\gr}{\mathrm{g}}
\newcommand{\ev}{\mathit{e}}
\newcommand{\Gg}{\mathrm{G}}
\title{From the Jordan product to Riemannian geometries on classical and quantum states}
\author{F. M. Ciaglia$^{1,3}$  \href{https://orcid.org/0000-0002-8987-1181}{\includegraphics[scale=0.7]{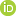}}, J. Jost$^{1,4}$\href{https://orcid.org/0000-0001-5258-6590}{\includegraphics[scale=0.7]{ORCID.png}}, L. Schwachh\"{o}fer$^{2,5}$\href{https://orcid.org/0000-0002-4268-6923}{\includegraphics[scale=0.7]{ORCID.png}}\\
\footnotesize{$^{1}$\textit{ Max Planck Institute for Mathematics in the Sciences,  04103  Leipzig, Germany}} \\
\footnotesize{$^{2}$\textit{Faculty for Mathematics, TU Dortmund University,  44221  Dortmund, Germany}} \\
\footnotesize{$^{3}$\textit{ e-mail: \texttt{ciaglia[at]mis.mpg.de}, \texttt{florio.m.ciaglia[at]gmail.com}}} \\
\footnotesize{$^{4}$\textit{ e-mail: \texttt{jjost[at]mis.mpg.de}}}\\
\footnotesize{$^{5}$\textit{ e-mail: \texttt{lschwach[at]math.tu-dortund.de. }}}
}
\begin{document}

\maketitle

\begin{abstract}
The Jordan product on the self-adjoint part of a finite-dimensional $C^{*}$-algebra $\mathscr{A}$ is shown to give rise to   Riemannian metric tensors on suitable manifolds of states on $\mathscr{A}$, and the covariant derivative, the geodesics, the Riemann tensor, and the sectional curvature of all these metric tensors are explicitly computed.
In particular, it is proved that the Fisher--Rao metric tensor is recovered in the Abelian case, that the Fubini--Study metric tensor is recovered when we consider pure states on the algebra $\mathcal{B}(\mathcal{H})$ of linear operators on a finite-dimensional Hilbert space $\mathcal{H}$, and that the Bures--Helstrom metric tensors is recovered when we consider faithful states on $\mathcal{B}(\mathcal{H})$.
Moreover, an alternative derivation of these Riemannian metric tensors in terms of the GNS construction  associated to a state is presented.
In the case of pure and faithful states on $\mathcal{B}(\mathcal{H})$, this alternative geometrical description clarifies the analogy between the Fubini--Study and the  Bures--Helstrom  metric tensor.

\end{abstract}

\thispagestyle{fancy}

\tableofcontents

\section{Introduction}

The study of geometrical structures on the space of classical and quantum states is a well-developed and constantly growing subject.
In the context of classical probability theory, it is almost impossible to overestimate the impact of the Fisher--Rao metric tensor $\Gg_{FR}$ introduced  by Rao in  
 \cite{Rao-1945} based on the Fisher information matrix introduced in \cite{Fisher-1922} (see also \cite{Amari-2016,AAVV1-1987,A-N-2000,Cencov-1982,F-C-M-2018}).
From the theoretical point of view, this metric tensor on the space of probability measures is characterized by a universality property, that is, it is the only Riemannian metric tensor (up to a constant number) that is equivariant with respect to the family of Markov morphisms between probability spaces (see \cite{A-J-L-S-2015,A-J-L-S-2017,B-B-M-2016,Campbell-1986,Cencov-1982} for more detailed discussions and for the proofs of this statement in various contexts).
In the quantum context, that is, when probability distributions are ``replaced'' by density operators on the Hilbert space of the system, the situation becomes more involved because, as Petz showed in \cite{Petz-1996} finishing the work started by Cencov and Morozowa in \cite{C-M-1991},  there is an infinite number of metric tensors on the manifold of (invertible) density operators satisfying the quantum version of the equivariance property of the Fisher--Rao metric tensor, namely, the equivariance with respect to the class of completely-positive, trace-preserving maps.
Furthermore, much has been discovered (and is still being discovered), on the link between these metric tensors and families of quantum relative entropies (see \cite{A-D-2015,Balian-2014,C-DC-L-M-M-V-V-2018,F-M-A-2019,M-M-V-V-2017}).
Quite interestingly, but not completely surprising, all these metric tensors reduce to the Fisher--Rao metric tensor if a suitable classical limit is performed.

From a purely theoretical point of view, there is no need to restrict our attention to geometrical structures on quantum states associated with \grit{covariant} tensor fields as in the case of metric tensors discussed above.
Indeed, in \cite{C-CG-JG-2017,C-C-I-M-V-2019,C-C-M-2018,C-DC-I-L-M-2017,C-DC-I-M-2017,C-DC-L-M-2017,C-I-M-2017}, the associative product of the algebra $\bh$ of linear operators on the finite-dimensional Hilbert space $\hh$ associated with a quantum system has been suitably exploited to define two \grit{contravariant} tensor fields on the space of self-adjoint operators on $\hh$, and these  tensor fields have been used to give a geometrical description of the Gorini--Kossakowski--Lindblad--Sudarshan (GKLS) equation describing the dynamical evolution of open quantum systems (see \cite{C-CG-JG-2017,C-C-M-2018,C-DC-I-L-M-2017,C-DC-L-M-2017,C-I-M-2019,G-K-S-1976,Lindblad-1976}).
These two tensor fields, named $\Lambda$ and $\mathcal{R}$, are associated with the antisymmetric part (the Lie product) and the symmetric part (the Jordan product) of the associative product in $\bh$, respectively.
In particular, the tensor field $\Lambda$ turns out to be the Poisson tensor associated with the coadjoint action of the unitary group $\Uh$ on the space of self-adjoint operators on $\hh$, when the latter is thought of as the dual space of the Lie algebra $\uh$.
Consequently, it makes sense to study the symplectic foliation associated with $\Lambda$, and it turns out that the symplectic leaves of $\Lambda$ passing through quantum states are the manifolds of isospectral density operators, in particular, the leaf passing through a pure state is diffeomorphic to the complex projective space of the Hilbert space $\hh$ of the system.
From the point of view of open quantum dynamics, the manifolds of isospectral quantum states are not ``big enough'' in the sense that the dynamical evolution associated with the GKLS equation will cross these manifolds transversally unless it coincides with the unitary evolution of a closed system.
Furthermore, from the point of view of quantum information theory, the manifold on which the metric tensors appearing in Petz's classification are defined is the manifold of invertible quantum states, which is the union of all the manifolds of isospectral quantum states with maximal rank.
Consequently, the geometry of $\Lambda$ is not enough to capture all the relevant features of quantum states.

In \cite{C-I-J-M-2019,G-K-M-2005,G-K-M-2006}, it has been shown that the relevant manifolds of quantum states may be described as submanifolds of homogeneous spaces of the group $\mathcal{GL}(\hh)$ of invertible operators on $\hh$.
Specifically, if $\xi$ denotes a self-adjoint operator on $\hh$, the group $\mathcal{GL}(\hh)$ acts on the space of self-adjoint operators according to the map $\xi\mapsto\gr\,\xi\,\gr^{\dagger}$, where $\dagger$ denotes the adjoint operator on $\bh$.
This action does not preserve the spectrum, and in particular the trace, of $\xi$ unless $\gr$ is unitary, however, it preserves the positivity of $\xi$ in the sense that $\gr\,\xi\,\gr^{\dagger}$ is positive semi-definite if $\xi$ is positive semi-definite, and it preserves the rank of $\xi$.
Furthermore, the orbit of $\mathcal{GL}(\hh)$ through a positive semi-definite operator turns out to be a submanifold of the space of self-adjoint operators on $\hh$, which is a  homogeneous space of $\mathcal{GL}(\hh)$ consisting of all the positive semi-definite operators with the same rank.
Then, on each of these orbits, it is possible to select all those elements with   unit trace and prove that this set is actually a submanifold of the orbit which, by construction, consists of all the quantum states with fixed rank.
In particular, we have two extreme cases: the manifold of pure quantum states consisiting of quantum states with rank equal to 1; the manifold of invertible quantum states consisting of quantum states with maximal rank.
The manifold of quantum states with maximal rank thus obtained coincides with the manifold of invertible quantum states on which Petz's metric tensors are defined.
Furthermore,  it is  possible to link the tensor fields $\Lambda$ and $\mathcal{R}$ with the action of $\mathcal{GL}(\hh)$  by  showing that the vector fields associated with linear functions  by means of $\Lambda$ and $\mathcal{R}$ provide the representation of the Lie algebra of $\mathcal{GL}(\hh)$ integrating to the action of $\mathcal{GL}(\hh)$ described above (see \cite{C-C-I-M-V-2019,C-DC-I-L-M-2017}).

In this work, we show that the orbits of $\mathcal{GL}(\hh)$ through positive semi-definite operators behave, with respect to $\mathcal{R}$, as the symplectic leaves (e.g., the manifolds of isospectral states) behave  for the Poisson tensor $\Lambda$.
Specifically,  we show that $\mathcal{R}$ is invertible on each of these orbits and its inverse gives rise to a Riemannian metric tensor the geometry of which we characterize by computing the covariant derivative, the geodesics, the Riemann tensor, and the sectional curvature.
Then, we study the Riemannian geometries on the manifolds of quantum states with the same rank arising from the fact that each of these manifolds is a submanifold of a given orbit of $\mathcal{GL}(\hh)$ through positive semi-definite operators.

To be more precise, in Section \ref{sec: states}, we review the geometrical aspects of the space of states of a finite-dimensional $C^{*}$-algebra  that we will need in the rest of the work.
In Sections \ref{sec: positive linear functionals} and \ref{subsec: states}, we actually prove the aforementioned results not only in the case of the quantum states associated with the algebra $\bh$, but in the more general case of states of a finite-dimensional $C^{*}$-algebra $\appa$.
This framework allows us to deal with classical and quantum states with the same formalism because the states of a finite-dimensional, Abelian $C^{*}$-algebra $\appa$ are in one-to-one correspondence with the probability distributions on a finite-outcome space.
Furthermore, it nicely fits into the recently developed groupoidal approach to quantum theories developed in \cite{C-DC-I-M-2020,C-DC-I-M-02-2020,C-I-M-2018,C-I-M-02-2019,C-I-M-03-2019,C-I-M-05-2019}.

Then, in Section \ref{sec: from GNS to Jordan metric}, we will take inspiration from Uhlmann's geometric construction of the  Bures--Helstrom metric tensor (see \cite{B-C-1994,B-Z-2006,Dittmann-1993,Dittmann-1995,Helstrom-1968,Helstrom-1969,Uhlmann-1976,Uhlmann-1986,Uhlmann-1992,Uhlmann-2011}) to show that the Riemannian geometries on the states associated with the Jordan product may be realized as ``projected shadows'' of the Riemannian geometries of suitable spheres in suitable Hilbert spaces by means of the GNS construction.

The geometrical picture that will  eventually emerge from the present work is that the Jordan product derived from the associative product in $\appa$ generates  Riemann metric tensors on the manifolds of states on $\appa$ that are associated with the action of the group $\gapp$ of invertible elements in $\appa$.
In particular, in Sections \ref{sec: FR}, \ref{sec: FB}, and \ref{subsec: helstrom metric} respectively,  these Riemannian metric tensors are shown to coincide with the Fisher--Rao metric tensor when $\appa$ is Abelian, with the Fubini--Study metric tensor when $\appa=\bh$ and we consider the manifold of pure quantum states (rank-one projectors in $\bh$) , and with the Bures--Helstrom metric tensor when $\appa=\bh$ and we consider the manifold of invertible quantum states.
Consequently, with the help of some imagination, we may interpret all these seemingly different metric tensors as being  different faces of the same object, namely, the contravariant tensor $\mathcal{R}$ determined by the Jordan product.
Section \ref{sec: conclusions} contains some concluding remarks.

\section{Geometrical Aspects of Positive Linear Functionals and States}\label{sec: states}

Let $\appa$ be a finite-dimensional, unital $C^{*}$-algebra in which the involution is denoted by $\dagger$ and unit by $\mathbb{I}$.
We refer to  \cite{Blackadar-2006,B-R-1987-1,Takesaki-2002} for the basic definitions concerning $C^{*}$-algebras.
Let $\appas$ be the self-adjoint part of $\appa$.
The associative product of $\appa$ gives rise to a commutative product $\{,\}$ and to a non-commutative product $[[,]]$ on $\appas$ by setting
\be\label{eqn: Lie and Jordan products}
\begin{split}
\{\mathbf{a},\mathbf{b}\}&\,:=\,\frac{1}{2}\,\left(\mathbf{ab} + \mathbf{ba}\right) \\
[[\mathbf{a},\mathbf{b}]]&\,:=\,\frac{1}{2\imath}\,\left(\mathbf{ab} - \mathbf{ba}\right),
\end{split}
\ee
where $\imath$ is the imaginary unit.
Note that both $\{,\}$ and $[[,]]$ are non-associative, and $[[,]]$ satisfies the Jacobi identity.
These two products make $\appa_{sa}$ into a Banach--Lie--Jordan algebra (see \cite{A-S-2001,A-S-2003,F-F-I-M-2013,J-vN-W-1934,Landsman-1998,Landsman-2007}).
In particular, $[[\mathbf{a},\,\cdot]]$ defines a derivation of $\{,\}$ for every $\mathbf{a}\in\appa_{sa}$.
Furthermore, since the Lie product $[[,]]$ makes $\appas$ into a Banach--Lie algebra, it is possible to show that there is a Banach--Lie group $\mathscr{U}$ of which $(\appas,[[,]])$ is the Banach--Lie algebra.
The group $\mathscr{U}$ is just the group of unitary elements in $\appa$, and is a subgroup of the Banach--Lie group $\gapp$ of invertible elements in $\appa$ (see \cite{Upmeier-1985}).

Let $\stav$ be the self-adjoint part of the Banach dual $\appa^{*}$ of $\appa$, that is, the set of all the linear functionals $\xi$ on $\appa$ such that
\be
\xi(\mathbf{a}^{\dagger})\,=\,\overline{\xi(\mathbf{a})}\quad \forall\,\mathbf{a}\in\,\appa,
\ee
where $\overline{\cdot}$ denotes complex conjugation.
A linear functional $\omega\in\stav\subset\appa^{*}$ is called positive if
\be
\omega(\mathbf{a a}^{\dagger})\,\geq\,0 \;\;\forall\,\,\mathbf{a}\,\in\,\appa .
\ee
A   positive linear functional $\omega$ is called faithful if 
\be
\omega(\mathbf{a a}^{\dagger})\,=\,0\,\,\,\Longleftrightarrow \,\,\,\mathbf{a}\,=\,\mathbf{0}.
\ee
The space of positive linear functionals on $\appa$ (excluding the null functional) is denoted by $\pos$, while $\pos_{+}$ denotes the space of faithful,   positive linear functionals, which is  an open submanifold of $\stav$. 

For future reference, we need to briefly recall the so-called Gelfand--Naimark--Segal (GNS) construction associated with a positive linear functional $\omega$ (see \cite{Blackadar-2006,C-M-2009,Takesaki-2002} for more details).
Given $\omega$, we define the set
\be\label{eqn: left ideal generated by positive linear functional}
N_{\omega}\,:=\,\left\{\mathbf{a}\in\appa\,|\;\;\omega(\mathbf{a}^{\dagger}\,\mathbf{a})\,=\,0\right\}.
\ee
This is a left ideal in $\appa$ called the \grit{Gel'fand ideal} of $\omega$.
Then, we consider the bilinear form on $\appa$ given by
\be\label{eqn: bilinear form associated with positive linear functional}
(\mathbf{a},\mathbf{b})_{\omega}\,:=\,\omega(\mathbf{a}^{\dagger}\,\mathbf{b}).
\ee
It is easily seen that $(,)_{\omega}$ is a pre-inner product that descends to the quotient 
\be
H_{\omega}\,=\,\appa/ N_{\omega}.
\ee
By completing $H_{\omega}$ with respect to $(,)_{\omega}$, we obtain a complex Hilbert space $\hh_{\omega}$ the elements of which are written as $\psi_{\mathbf{a}}$ to emphasize that they are associated with (equivalence classes of) elements of $\appa$.
The Hilbert space product on $\hh_{\omega}$ is written as $\langle,\rangle$, and there is a representation $\mathrm{r}_{\omega}$ of $\appa$ in $\mathcal{B}(\hh_{\omega})$ given by
\be\label{eqn: GNS representation}
\left(\mathrm{r}_{\omega}(\mathbf{a})\right)(\psi_{\mathbf{b}})\,:=\,\psi_{\mathbf{a}\mathbf{b}} .
\ee
The vector $\psi_{\mathbb{I}}$ is cyclic with respect to $\mathrm{r}_{\omega}$ and separating for the commutant of $\mathrm{r}_{\omega}(\appa)$ in $\mathcal{B}(\hh_{\omega})$ (see prop. 2.5.3 of \cite{B-R-1987-1}).
Moreover, every vector in $\hh_{\omega}$ gives rise to a positive linear functional on $\appa$ by means of
\be\label{eqn: positive functional from vectors in the GNS hilbert space}
\omega_{\mathbf{a}}(\mathbf{b})\,:=\,\langle\psi_{\mathbf{a}}|\mathrm{r}_{\omega}(\mathbf{b})|\psi_{\mathbf{a}}\rangle.
\ee
If $\omega$ is faithful, then $N_{\omega}=\{\mathbf{0}\}$ and $H_{\omega}=\appa$.
Therefore, in the finite-dimensional case, $\hh_{\omega}$ coincides with $\appa$ and $\mathrm{r}_{\omega}$ becomes the left regular representation $\mathbf{a}\mapsto L_{\mathbf{a}}$ of $\appa$ on itself.

\begin{Proposition}\label{prop: elements in the left ideal generated by a positive linear functionals  have a nice property}
Let $\omega$ be a positive linear functional on $\appa$, and let $\mathbf{a}\in\appa$ be an element of the Gel'fand ideal $N_{\omega}$ (see Equation \eqref{eqn: left ideal generated by positive linear functional}).
Then, we have
\be
\omega(\mathbf{a}^{\dagger}\,\mathbf{b})\,=\,\omega(\mathbf{b}\,\mathbf{a})\,=\,0
\ee
for all $\mathbf{b}\in\appas$.
\end{Proposition}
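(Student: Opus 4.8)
The plan is to use the Cauchy--Schwarz inequality for the pre-inner product $(\cdot,\cdot)_\omega$ defined in Equation \eqref{eqn: bilinear form associated with positive linear functional}. First I would recall that $(\cdot,\cdot)_\omega$ is a positive semidefinite Hermitian form on $\appa$, so the Cauchy--Schwarz inequality $|(\mathbf{a},\mathbf{b})_\omega|^2 \leq (\mathbf{a},\mathbf{a})_\omega \,(\mathbf{b},\mathbf{b})_\omega$ holds for all $\mathbf{a},\mathbf{b}\in\appa$. By hypothesis $\mathbf{a}\in N_\omega$, which means exactly $(\mathbf{a},\mathbf{a})_\omega = \omega(\mathbf{a}^\dagger\mathbf{a}) = 0$. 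Hence for any $\mathbf{b}\in\appa$ we get $|(\mathbf{a},\mathbf{b})_\omega|^2 \leq 0$, forcing $(\mathbf{a},\mathbf{b})_\omega = \omega(\mathbf{a}^\dagger\mathbf{b}) = 0$. This already establishes $\omega(\mathbf{a}^\dagger\mathbf{b}) = 0$ for all $\mathbf{b}\in\appa$, in particular for all $\mathbf{b}\in\appas$.

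Next I would deal with the second equality $\omega(\mathbf{b}\mathbf{a}) = 0$. Since $\mathbf{b}\in\appas$ is self-adjoint, we have $(\mathbf{b}\mathbf{a})^\dagger = \mathbf{a}^\dagger\mathbf{b}^\dagger = \mathbf{a}^\dagger\mathbf{b}$. Because $\omega$ lies in $\stav$, the self-adjoint part of $\appa^*$, it satisfies $\omega(\mathbf{c}^\dagger) = \overline{\omega(\mathbf{c})}$ for all $\mathbf{c}$; applying this with $\mathbf{c} = \mathbf{b}\mathbf{a}$ gives $\omega(\mathbf{a}^\dagger\mathbf{b}) = \omega((\mathbf{b}\mathbf{a})^\dagger) = \overline{\omega(\mathbf{b}\mathbf{a})}$. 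Combining with the first part, $\overline{\omega(\mathbf{b}\mathbf{a})} = 0$, hence $\omega(\mathbf{b}\mathbf{a}) = 0$ as well.

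I do not expect any serious obstacle here: the result is essentially an immediate consequence of Cauchy--Schwarz for the GNS pre-inner product together with the Hermitian symmetry property of elements of $\stav$. The only point requiring a modicum of care is making sure the Cauchy--Schwarz inequality is invoked for a merely positive semidefinite (not necessarily definite) form, which is standard and valid precisely because positivity of $\omega$ guarantees $(\mathbf{c},\mathbf{c})_\omega \geq 0$ for all $\mathbf{c}$; the usual proof of Cauchy--Schwarz via expanding $(\mathbf{a} + \lambda\mathbf{b}, \mathbf{a} + \lambda\mathbf{b})_\omega \geq 0$ goes through unchanged. Alternatively, one could note that $N_\omega$ being a left ideal means $\mathbf{c}\mathbf{a}\in N_\omega$ whenever $\mathbf{a}\in N_\omega$, but the Cauchy--Schwarz route is the cleanest.
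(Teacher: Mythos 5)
Your proposal is correct and follows essentially the same route as the paper, namely the Cauchy--Schwarz inequality for the positive-semidefinite GNS form $(\cdot,\cdot)_{\omega}$ applied with $(\mathbf{a},\mathbf{a})_{\omega}=0$. In fact your write-up is slightly more complete than the paper's, since you explicitly justify the second equality $\omega(\mathbf{b}\mathbf{a})=0$ via the hermiticity of $\omega$ and the self-adjointness of $\mathbf{b}$, a step the paper leaves implicit.
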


\begin{proof}
We just need to apply the Cauchy--Schwarz inequality (see prop. 2.3.10.b of \cite{B-R-1987-1}) to the positive-semidefinite sesquilinear form defined by $\omega$ in Equation \eqref{eqn: bilinear form associated with positive linear functional}.
Specifically, we have
\be
\left|(\mathbf{a},\mathbf{b})_{\omega}\right|^{2}\,=\,\left|\omega(\mathbf{a}^{\dagger}\,\mathbf{b})\right|^{2}\,\leq\,\omega(\mathbf{a}^{\dagger}\mathbf{a})\,\omega(\mathbf{b}^{2})\,=\,0,
\ee
where, in the last equality, we used the fact that $\mathbf{a}$ is in $N_{\omega}$. 
\end{proof}

The space $\pos$ is not a smooth manifold in the usual sense of differential geometry (see \cite{A-M-R-1988,Jost-2017} for the appropriate definition of smooth manifold). However, there is a linear left action $\alpha$ of the  Lie group $\gapp$ of invertible elements in $\appa$ on the space $\stav$   given by (see \cite{C-I-J-M-2019,G-K-M-2005,G-K-M-2006})
\be\label{eqn: action on positive linear functionals}
\alpha(\gr,\xi)\,\equiv\,\xi_{\gr}\,\;\colon\;\;\;\xi_{\gr}(\mathbf{a})\,:=\,\xi(\gr^{\dagger}\,\mathbf{a}\,\gr)\;\;\forall\,\,\mathbf{a}\,\in\,\appas .
\ee
This action preserves $\pos$, and every orbit of $\alpha$ is a smooth submanifold of $\stav$. That is, even though $\pos$ itself does not have a smooth structure, it is stratified by the orbits of this action all of which are homogeneous spaces and hence smooth manifolds; in fact, the manifold structure of these homogeneous spaces coincides with that induced by the inclusion into $\stav$ as the action of $\gapp$ is defined on all of $\stav$.
The top stratum, i.e., that of maximal dimension, is easily seen to be the space $\pos_{+}$ of faithful, positive linear functionals.

Recall that  every $\mathbf{a}\in\appas$ may be identified with a real-valued, linear function on $\stav$, that we denote by $l_{\mathbf{a}}$, by means of the expression
\be
l_{\mathbf{a}}(\xi)\,:=\,\xi(\mathbf{a}).
\ee
Since $\stav$ is a finite-dimensional Banach space, the map $\mathbf{a}\mapsto l_{\mathbf{a}}$ is an isomorphism between $\appas$ and $\stav^{*}=\appas^{**}$, and thus the differential of the linear functions on $\stav$ associated with elements in $\appas$ generate the cotangent space $T_{\xi}^{*}\stav$ at each $\xi$.

Now, given $\mathbf{a}\in\appas$, we introduce the vector fields $X_{\mathbf{a}}$ and $Y_{\mathbf{a}}$ given by
\be\label{eqn: linear Hamiltonian and gradient vector fields on positive linear functionals}
\begin{split}
X_{\mathbf{a}}(l_{\mathbf{b}} )&\,=\,l_{[[\mathbf{a},\mathbf{b}]]}  \\
Y_{\mathbf{a}}(l_{\mathbf{b}} )&\,=\,l_{\{\mathbf{a},\mathbf{b}\}} \,.
\end{split}
\ee
For reasons that will be clear later,  we call $Y_{\mathbf{a}}$ a gradient vector field, and $X_{\mathbf{a}}$ a Hamiltonian vector field.
Next, we define the vector field
\be\label{eqn: fund vect pos}
\mathbf{V}_{\mathbf{a}\mathbf{b}}\,:=\,Y_{\mathbf{a}} + X_{\mathbf{b}}\,.
\ee
A direct computation based on  Equation \eqref{eqn: linear Hamiltonian and gradient vector fields on positive linear functionals} shows that the Lie bracket $[,]$ between $\mathbf{V}_{\mathbf{a}\mathbf{b}}$ and $\mathbf{V}_{\mathbf{c}\mathbf{d}}$ reads
\be\label{eqn: commutator for fundamental vector fields on positive linear functionals}
\left[\mathbf{V}_{\mathbf{a}\mathbf{b}},\,\mathbf{V}_{\mathbf{c}\mathbf{d}}\right]\,=\,Y_{[[\mathbf{a},\mathbf{d}]] + [[\mathbf{b},\mathbf{c}]]} + X_{[[\mathbf{c},\mathbf{a}]] + [[\mathbf{b},\mathbf{d}]]}\,.
\ee
In particular, it follows that the Hamiltonian vector fields $\{X_{\mathbf{a}}\}_{\mathbf{a}\in\appas}$ provide an anti-representation of the Lie algebra $\mathfrak{u}$ of the unitary group $\mathscr{U}$ of $\appa$.
According to what will be proved below, the  left action of $\mathscr{U}$ generated by the Hamiltonian vector fields is just $\alpha$ restricted to $\mathscr{U}$.

We will now prove that the vector fields $\mathbf{V}_{\mathbf{a}\mathbf{b}}$ are the fundamental vector fields of the action $\alpha$.
Specifically, we consider an element $\gr\in\gapp$, and write it as  
\be\label{eqn: exponential of elements}
\gr=\mathrm{e}^{\frac{1}{2}(\mathbf{a} + \imath\mathbf{b})}
\ee
with $\mathbf{a},\mathbf{b}\in\appas$.
This is always possible because $\appa=\appas\oplus\imath\appas$, where $\imath$ is the imaginary unit, is the Lie algebra of the Lie group $\gapp$ of invertible elements in $\appa$.
Then,  we may consider the smooth curve
\be\label{eqn: curve of invertible elements}
\gr(t)=\mathrm{e}^{\frac{t}{2}(\mathbf{a} + \imath\mathbf{b})}
\ee
starting at $\gr(0)=\mathbb{I}$, and  compute the fundamental vector field $F$  of the action $\alpha$ associated with $\gr(t)$.
This vector field is defined as the infinitesimal generator of the one-parameter group of diffeomorphisms of $\stav$ generated by $\gr(t)$ by means of $\alpha$ (see \cite{A-M-R-1988} (p. 331)).
Specifically, it is 
\be
\begin{split}\label{eqn: fund vect 1}
\langle\mathrm{d}l_{\mathbf{c}}(\omega),F(\xi)\rangle &\,=\,\frac{\mathrm{d}}{\mathrm{d}t}\,\left(l_{\mathbf{c}}\left(\alpha(\gr(t),\,\xi)\right)\right)_{t=0}\,=\, \\
&\,=\,\frac{\mathrm{d}}{\mathrm{d}t}\,\left(\xi\left(\mathrm{e}^{\frac{t}{2}(\mathbf{a} - \imath\mathbf{b})}\,\mathbf{c}\,\mathrm{e}^{\frac{t}{2}(\mathbf{a} + \imath\mathbf{b})}\right)\right)_{t=0}\,=\, \\
&\,=\,\xi\left( \{\mathbf{a},\mathbf{c}\} + [[\mathbf{b},\,\mathbf{c}]]\right),
\end{split}
\ee
for all $\xi\in\stav$.
Comparing Equation \eqref{eqn: fund vect 1} with Equations \eqref{eqn: fund vect pos} and \eqref{eqn: linear Hamiltonian and gradient vector fields on positive linear functionals}, we conclude that $F=\mathbf{V}_{\mathbf{a}\mathbf{b}}$ as claimed.

Let us fix an orbit $\mathcal{O}\subset\pos$ of $\gapp$.
The tangent space $T_{\omega}\mathcal{O}$ is thus identified with the subspace of  $\stav\cong T_{\omega}\stav$ written as
\be
T_{\omega}\mathcal{O}\,\cong\,\left\{\omega_{\mathbf{a}\mathbf{b}}\in\stav\,|\,\,\,\omega_{\mathbf{a}\mathbf{b}}(\mathbf{c})\,=\,\omega(\{\mathbf{a},\mathbf{c}\} + [[\mathbf{b},\mathbf{c}]])\;\forall\,\mathbf{c}\in\appas\right\}.
\ee
In this work, the symbol $\cong$ denotes the identification of two different sets.
Then, the cotangent space $T_{\omega}^{*}\mathcal{O}$ is isomorphic with $T_{\omega}^{*}\stav/Ann(T_{\omega}\mathcal{O})$ where $Ann(T_{\omega}\mathcal{O})$ is the annihilator of $T_{\omega}\mathcal{O}$ inside $T_{\omega}^{*}\stav$.
From the practical point of view,  we define the functions
\be
l^{+}_{\mathbf{b}}\,:=\,i^{*}\,l_{\mathbf{b}}
\ee
for every $\mathbf{b}\in\appas$, where $i$ is the canonical immersion of the orbit $\mathcal{O}\subset\mathscr{P}$ in $\stav$, and we obtain that the cotangent vector $\mathrm{d}l^{+}_{\mathbf{b}}(\omega)$ at every $\omega\in\mathcal{O}$ is identified with $\mathbf{b}$.
Clearly, since the set $\{\mathrm{d}l_{\mathbf{b}}(\xi)\}_{\mathbf{b}\in\appas}$  is an overcomplete basis for $T_{\xi}^{*}\stav$ for every $\xi\in\stav$, we have that the set $\{\mathrm{d}l_{\mathbf{b}}^{+}(\omega)\}_{\mathbf{b}\in\appas}$ is an overcomplete basis for $T_{\omega}^{*}\mathcal{O}$ for every $\omega\in\mathcal{O}$.

Now, we will pass from positive linear functional, to states.
A positive linear functional $\rho$ is called a  state if
\be
\rho(\mathbb{I})\,=\, 1,
\ee
where $\mathbb{I}$ is the identity element in $\appa$.
The space of states $\stsp$ is the convex body in $\pos$ which is the intersection of $\pos$ with the affine hyperplane determined as the inverse image of $1$ through the linear function $l_{\mathbb{I}}$, with $\mathbb{I}\in\appa$ being the identity element.
Consequently, if $\mathcal{O}$ is an orbit of $\gapp$ in $\pos$ through $\omega$, we may consider the inverse image of $1$ through the function $l^{+}_{\mathbb{I}}$ and obtain a smooth manifold, denoted by $\mathcal{O}_{1}$, of states as a closed submanifold of $\mathcal{O}$.
Clearly, we may do that for every orbit $\mathcal{O}$ in $\pos$, and thus we obtain a stratification of $\stsp$ into the disjoint union of smooth manifolds, where  the top stratum, denoted by $\stsp_{+}$, is the space of faithful states.
Note that some of these manifolds can be degenerate, i.e., points.
In particular, this happens when $\appa$ is Abelian and $\mathcal{O}_{1}$ contains a pure state  (recall that pure states are the extremal points of $\stsp$).
In the following, whenever we consider a manifold $\mathcal{O}_{1}$ of states, we will always implicitly assume that $\mathcal{O}_{1}$ is not a single point.

Concerning pure states, it is worth mentioning that, according to \cite{C-M-P-1994},  the   functional representation of a commutative $C^{*}$-algebra in terms of complex-valued functions on the space of pure states may be extended to any noncommutative $C^{*}$-algebra by looking at the space of pure states as a bundle of K\"{a}hler manifolds, and using the K\"{a}hler metric to define a noncommutative product between complex-valued functions on the pure states.

We will now see how the group $\gapp$ acts on every $\mathcal{O}_{1}$ making it a homogeneous space.
To this scope, we first note that, if $\rho$ is a state sitting inside $\mathcal{O}\subset\pos$, then $\alpha(\gr,\rho)$ is in general not a state.
From the infinitesimal point of view, this is related to the fact that the gradient vector fields on $\mathcal{O}$ do not preserve $\mathcal{O}_{1}$ because $Y_{\mathbf{a}}(l^{+}_{\mathbb{I}})$ in general does not vanish.
However, if we set
\be\label{eqn: widetilde gradient}
\widetilde{Y_{\mathbf{a}}}\,:=\,Y_{\mathbf{a}} - l^{+}_{\mathbf{a}}Y_{\mathbb{I}},
\ee
then  $\widetilde{Y_{\mathbf{a}}}(l^{+}_{\mathbb{I}})\,=\,Y_{\mathbf{a}}(l^{+}_{\mathbb{I}}) - l^{+}_{\mathbf{a}}Y_{\mathbb{I}}(l^{+}_{\mathbb{I}})\,=\,0$, and thus $\widetilde{Y_{\mathbf{a}}}$ is tangent to $\mathcal{O}_{1}$.
This means that there is a vector field $\mathbb{Y}_{\mathbf{a}}$ on $\mathcal{O}_{1}$ which is $i_{1+}$-related to $\widetilde{Y_{\mathbf{a}}}$, where $i_{1+}\colon\mathcal{O}_{1}\lra\mathcal{O}$ is the canonical immersion map given by identification.
Furthermore, every Hamiltonian vector field $X_{\mathbf{a}}$ is tangent to $\mathcal{O}_{1}$, and we denote by $\mathbb{X}_{\mathbf{a}}$ the vector field on $\mathcal{O}_{1}$ which is $i_{1+}$-related with $X_{\mathbf{a}}$.
For reasons that will be clear later, we call $\mathbb{Y}_{\mathbf{a}}$ a gradient vector field, and $\mathbb{X}_{\mathbf{a}}$ a Hamiltonian vector field.

Now, we define the vector fields $\{\Upsilon_{\mathbf{a}\mathbf{b}}\}_{\mathbf{a},\mathbf{b}\in\appas}$ by means of
\be\label{eqn: fundamental vector field of PHI}
\Upsilon_{\mathbf{a}\mathbf{b}}\,:=\, \mathbb{Y}_{\mathbf{a}} + \mathbb{X}_{\mathbf{b}}.
\ee
Quite interestingly, a direct computation shows that 
\be
\left[\widetilde{Y_{\mathbf{a}}} + X_{\mathbf{b}},\,\widetilde{Y_{\mathbf{c}}} + X_{\mathbf{d}} \right]\,=\,\widetilde{Y}_{[[\mathbf{a},\mathbf{d}]] + [[\mathbf{b},\mathbf{c}]]} + X_{[[\mathbf{c},\mathbf{a}]] + [[\mathbf{b},\mathbf{d}]]}\,,
\ee
which means that we also have
\be\label{eqn: commutators between fundamental vector field of PHI}
\left[\Upsilon_{\mathbf{a}\mathbf{b}},\,\Upsilon_{\mathbf{c}\mathbf{d}}\right]\,=\,\mathbb{Y}_{[[\mathbf{a},\mathbf{d}]] + [[\mathbf{b},\mathbf{c}]]} + \mathbb{X}_{[[\mathbf{c},\mathbf{a}]] + [[\mathbf{b},\mathbf{d}]]}\,.
\ee
Comparing Equation \eqref{eqn: commutators between fundamental vector field of PHI} with Equation \eqref{eqn: commutator for fundamental vector fields on positive linear functionals}, we conclude that  the vector fields $\{\Upsilon_{\mathbf{a}\mathbf{b}}\}_{\mathbf{a},\mathbf{b}\in\appas}$ provide a representation of the Lie algebra $\mathfrak{g}$ of $\gapp$ which is tangent to $\mathcal{O}_{1}$.
Furthermore, if we define the map  
$\Phi\colon\mathcal{O}_{1}\lra\mathcal{O}_{1}$ given by (see also \cite{C-I-J-M-2019,G-K-M-2005,G-K-M-2006})
\be\label{eqn: action of gapp on the states}
\Phi(\gr,\rho)\,\equiv\,\rho_{\gr}\,\;\colon\;\;\; \rho_{\gr}(\mathbf{a})\,:=\,\frac{(\alpha(\gr,\rho))(\mathbf{a})}{(\alpha(\gr,\rho))(\mathbb{I})}\,=\,\frac{\rho(\gr^{\dagger}\,\mathbf{a}\,\gr)}{\rho(\gr^{\dagger}\,\gr)}\;\;\forall\,\,\mathbf{a}\,\,\in\appas ,
\ee
it is not hard to show that it is a left action of $\gapp$ on $\mathcal{O}_{1}$ which is transitive (essentially because $\alpha$ is transitive on $\mathcal{O}$).
In particular, the space $\stsp_{+}$ of faithful states is an orbit of $\gapp$.
The flow of the vector field $\Upsilon_{\mathbf{a}\mathbf{b}}$ is just  $\Phi(\gr(t),\rho)$, where $\gr(t)$ is defined as in Equation \eqref{eqn: curve of invertible elements}, and thus  the fundamental vector fields of $\Phi$ are precisely the $\Upsilon_{\mathbf{a}\mathbf{b}}$'s.

Note that the map $\Phi$ is well-defined  because the denominator term is always strictly positive since  $\gr$ is an invertible element.
However, note that $\Phi$ does not preserve the convex structure of $\stsp$, that is, we have
\be
\Phi(\gr,\,\lambda\,\rho_{1} + (1-\lambda)\rho_{2})\,\neq\,\lambda\,\Phi(\gr,\,\rho_{1}) + (1-\lambda)\,\Phi(\gr,\,\rho_{2})
\ee
in general. 

Let us now fix the orbit $\mathcal{O}_{1}\subset\stsp$.
Defining the function
\be
\ev_{\mathbf{a}}\,:=\,i_{1+}^{*}l^{+}_{\mathbf{a}}\,=\,i_{1+}^{*}\,i^{*}l_{\mathbf{a}}
\ee
for every $\mathbf{a}\in\appas$, it is immediate to check that
\be\label{eqn: action of fundamental vector fields on expectation value functions}
\begin{split}
\mathbb{X}_{\mathbf{a}}(\ev_{\mathbf{b}})&\,=\,\ev_{[[\mathbf{a},\mathbf{b}]]} \\
\mathbb{Y}_{\mathbf{a}}(\ev_{\mathbf{b}})&\,=\,\ev_{\{\mathbf{a},\mathbf{b}\}} - \,\ev_{\mathbf{a}} \,\ev_{\mathbf{b}} .
\end{split}
\ee
The tangent space at $\rho\in\mathcal{O}_{1}$ is  identified with the subspace of $\stav\cong T_{\rho}\stav$ written as
\be\label{eqn: tangent space at a state}
T_{\rho}\mathcal{O}_{1}\,\cong\,\left\{\rho_{\mathbf{a}\mathbf{b}}\in\stav\,|\,\,\,\rho_{\mathbf{a}\mathbf{b}}(\mathbf{c})\,=\,\rho(\{\mathbf{a},\mathbf{c}\} + [[\mathbf{b},\mathbf{c}]]) - \rho(\mathbf{a})\,\rho(\mathbf{c})\;\forall\,\mathbf{c}\in\appas\right\},
\ee
while the cotangent space $T_{\rho}^{*}\mathcal{O}_{1}$ is isomorphic with $T_{\rho}^{*}\stav/Ann(T_{\rho}\mathcal{O}_{1})$ where $Ann(T_{\rho}\mathcal{O}_{1})$ is the annihilator of $T_{\rho}\mathcal{O}_{1}$ inside $T_{\rho}^{*}\stav$.
From the practical point of view, given $\rho\in\mathcal{O}_{1}$, just as it happens for the orbit $\mathcal{O}\subset\pos$,  we obtain that the cotangent vector $\mathrm{d}\ev_{\mathbf{b}}(\rho)$ is identified with $\mathbf{b}$.
Clearly, since the set $\{\mathrm{d}l_{\mathbf{a}}(\xi)\}_{\mathbf{a}\in\appas}$  is an overcomplete basis for $T_{\xi}^{*}\stav$ for every $\xi\in\stav$, we have that the set $\{\mathrm{d}\ev_{\mathbf{b}}(\rho)\}_{\mathbf{b}\in\appas}$ is an overcomplete basis for $T_{\rho}^{*}\mathcal{O}_{1}$ for every $\rho\in\mathcal{O}_{1}$.

\begin{Remark}
An identification similar to that given in Equation \eqref{eqn: tangent space at a state} (with $\mathbf{b}=0$) may be found also in \cite{A-N-2000}, under the name of e-representation, and in \cite{Petz-1993} for faithful states.
However, in these works, the authors consider only the case $\appa=\bh$ (with $\mathrm{dim}(\hh)<\infty$) so that they identify the space of faithful states $\stsp_{+}$ with the space of faithful density operators in $\bh$ by means of the isomorphism between $\bh$ and its dual induced by the trace on $\hh$, and  no mention is made of the gradient and Hamiltonian vector fields nor of the associated action of $\gapp$ on $\stsp_{+}$.
On the other hand, here we want to stress that the identification of $T_{\rho}\mathcal{O}_{1}$ with a linear subspace of $\stav$  given by Equation \eqref{eqn: tangent space at a state} works for every orbit $\mathcal{O}_{1}\subset\stsp$ and it is part of the ``internal geometry'' of the space of states of   $\appa$.
\end{Remark}

\section{From the Jordan Product to Riemannian Geometries}\label{sec: positive linear functionals}

We will now exploit the Jordan--Lie-algebra structure of $\appas$ introduced above to obtain geometric tensor fields on $\stav$, specifically, we obtain a symmetric, contravariant bivector field $\mathcal{R}$  associated with the Jordan product $\{,\}$, and a Poisson bivector field $\Lambda$ associated with the Lie product $[[,\,]]$ on $\appas$.
This is the generalization to a generic (finite-dimensional) $C^{*}$-algebra $\appa$ of what is done in \cite{C-CG-JG-2017,C-C-I-M-V-2019,C-C-M-2018,C-DC-I-L-M-2017,C-DC-I-M-2017,C-DC-L-M-2017,C-I-M-2017} for the specific case $\appa=\bh$ for a finite-dimensional Hilbert space $\hh$. 
Then, we will show how the manifolds of positive linear functionals introduced in the previous section may be interpreted as a sort of analogs of symplectic leaves for the symmetric tensor $\mathcal{R}$ in a sense that will be specified later.
This will allow us to define Riemannian geometries on the orbits of $\gapp$ in $\pos$ that will be studied in some detail.

In order to define $\mathcal{R}$ and $\Lambda$, we recall that the differentials of the linear functions $l_{\mathbf{a}}$ with $\mathbf{a}\in\appas$ generate the cotangent space $T_{\xi}^{*}\stav$ at every $\xi\in\stav$, so that we may set
\be\label{eqn: bivector field of the anticommutator}
\left(\mathcal{R}(\mathrm{d}l_{\mathbf{a}},\mathrm{d}l_{\mathbf{b}})\right)(\xi)\,:=\,l_{\{\mathbf{a},\mathbf{b}\}}(\xi)\,=\,\xi(\{\mathbf{a},\mathbf{b}\}) ,
\ee
\be\label{eqn: bivector field of the commutator}
\left(\Lambda(\mathrm{d}l_{\mathbf{a}},\mathrm{d}l_{\mathbf{b}})\right)(\xi)\,:=\,l_{[[\mathbf{a},\mathbf{b}]]}(\xi)\,=\,\xi([[\mathbf{a},\mathbf{b}]]) ,
\ee
and extend these objects by linearity obtaining two contravariant tensor fields
\be 
\left(\mathcal{R}(\mathrm{d}f_{1},\mathrm{d}f_{2})\right)(\xi)\,:=\,\xi\left(\{\mathrm{d}f_{1}(\xi),\,\mathrm{d}f_{2}(\xi)\}\right),
\ee
\be
\left(\Lambda(\mathrm{d}f_{1},\mathrm{d}f_{2})\right)(\xi)\,:=\,\xi\left([[\mathrm{d}f_{1}(\xi),\,\mathrm{d}f_{2}(\xi)]]\right).
\ee
The antisymmetry of $[[,]]$ implies that $\Lambda$ is antisymmetric, while the symmetricity of $\{,\}$ implies that $\mathcal{R}$ is symmetric.
Moreover, note that  both tensors have non-constant rank, and $\Lambda=0$ if $\appa$ is Abelian.

If $\appa=\bh$ for some finite-dimensional Hilbert space $\hh$, it is a matter of direct inspection  to show that tensor fields $\mathcal{R}$ and $\Lambda$ defined above coincide with those introduced in \cite{C-CG-JG-2017,C-C-I-M-V-2019,C-C-M-2018,C-DC-I-L-M-2017,C-DC-I-M-2017,C-DC-L-M-2017,C-I-M-2017}.

The Lie algebra of the unitary group $\mathscr{U}$ may be  identified with the space $\appas$ of self-adjoint elements in $\appa$ (see Equation \eqref{eqn: exponential of elements}), and thus  $\Lambda$ may be interpreted as the Kostant--Kirillov--Souriau Poisson tensor associated with the coadjoint action of the unitary group $\mathscr{U}$. 
Since $\Lambda$ is a Poisson tensor, we may introduce the Hamiltonian vector field $X_{f}$ associated with a smooth function $f$ on $\stav$ by means of $\Lambda$ by setting
\be
X_{f}\,:=\,\Lambda(\mathrm{d}f,\,\cdot)\,.
\ee
In particular, it is immediate to check that the Hamiltonian vector field associated with the linear function $l_{\mathbf{a}}$ (with $\mathbf{a}\in\appas$) coincides with the fundamental vector field $X_{\mathbf{a}}=\mathbf{V}_{\mathbf{0}\mathbf{a}}$ for $\alpha$ introduced in the previous section.
Analogously, we may (improperly) define   the gradient vector field $Y_{f}$ associated with a smooth function $f$ on $\stav$ by means of $\mathcal{R}$ by setting
\be
Y_{f}\,:=\,\mathcal{R}(\mathrm{d}f,\,\cdot).
\ee
Again, it is immediate to check that the gradient  vector field associated with the linear function $l_{\mathbf{a}}$ (with $\mathbf{a}\in\appas$)  coincides with the fundamental vector field $Y_{\mathbf{a}}=\mathbf{V}_{\mathbf{a}\mathbf{0}}$ for $\alpha$ introduced in the previous section.
This gives an intimate connection between the tensor field $\Lambda$ and $\mathcal{R}$ and the action $\alpha$ of $\gapp$ on $\stav$.
In particular, the Hamiltonian vector fields $\mathbb{X}_{\mathbf{a}}$ generate the action of the unitary group $\mathscr{U}\subset\gapp$  and the orbits of this action, which are embedded, compact submanifolds of $\stav$ because $\mathscr{U}$ is a compact group (only in finite dimensions), are the leaves of the symplectic foliation associated with the Poisson tensor $\Lambda$.
When $\appa$ coincides with the $C^{*}$-algebra $\bh$ of   linear operators on the finite-dimensional, complex Hilbert space $\hh$, it is not hard to see that the orbit through $\xi\in\stav$ is in one-to-one correspondence with the set of self-adjoint operators on $\hh$ that are isospectral with the self-adjoint operator $\tilde{\xi}$, which is uniquely associated to $\xi$ by means of the isomorphism between $\appas$ and $\stav$ induced by the trace on $\hh$.
When $\appa$ is Abelian, then $\Lambda=0$ and the action of $\mathscr{U}$ on $\stav$ is trivial.

The orbits of $\mathscr{U}$ are such that the Poisson tensor $\Lambda$ is invertible on them and thus gives rise to a symplectic form on every orbit.
We may try to obtain a similar construction for the tensor field $\mathcal{R}$, that is, we may try to find suitable submanifolds of $\stav$ on which $\mathcal{R}$ is invertible.
In a certain sense, we are looking for analogs, for $\mathcal{R}$, of what would be  the symplectic leaves of $\Lambda$.

Quite interestingly, we will see that every orbit $\mathcal{O}$ of positive linear functionals provides an example of such an analog of a symplectic leaf.
In particular, we will see that the inverse $\Gg$ of $\mathcal{R}$ on $\mathcal{O}$ determines a Riemannian metric tensor, and compute its associated covariant derivative, sectional curvature, and Riemann tensor.
Then, we will study the Riemannian geometry on the orbit $\mathcal{O}_{1}\subset\mathcal{O}$ of states arising from the canonical immersion $i_{1+}\colon\mathcal{O}_{1}\lra\mathcal{O}$ given by the identification map.


Let us fix the orbit $\mathcal{O}\subset\pos$.
It is an immersed submanifold of $\stav$, consequently, the set $\{\mathrm{d}l_{\mathbf{a}}^{+}(\omega)\}_{\mathbf{a}\in \appas}$, where $l_{\mathbf{a}}^{+}=i^{*}l_{\mathbf{a}}$, is an overcomplete basis for the cotangent space $T_{\omega}\mathcal{O}$.
Therefore, we may define the symmetric, $(0,2)$ tensor field $R$ on $\mathcal{O}$ by setting
\be
\left(R(\mathrm{d}l_{\mathbf{a}}^{+},\mathrm{d}l_{\mathbf{b}}^{+})\right)(\omega)\,:=\, \omega(\{\mathbf{a} ,\mathbf{b}\}),
\ee
and then extend by linearity just as we did for the definition of $\mathcal{R}$.
By construction, we have that
\be
R\left(i^{*}\theta_{1},i^{*}\theta_{2}\right)\,=\,i^{*}\,\left(\mathcal{R}(\theta_{1},\theta_{2})\right)
\ee
for all 1-forms $\theta_{1},\theta_{2}$ on $\stav$.
The  tensor $R$ may be thought of as the restriction of $\mathcal{R}$ to $\mathcal{O}$.

\begin{Proposition}\label{prop: R is invertible and positive}
The contravariant tensor $R$ is symmetric, invertible and positive.
\end{Proposition}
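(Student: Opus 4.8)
The plan is to establish the three properties in turn, working entirely within the identification of $T_\omega^*\mathcal{O}$ with $\appas/Ann(T_\omega\mathcal{O})$, under which $\mathrm{d}l^+_{\mathbf{a}}(\omega)$ is represented by $\mathbf{a}$.

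\emph{Symmetry} is immediate: $\{\mathbf{a},\mathbf{b}\}=\{\mathbf{b},\mathbf{a}\}$ by definition of the Jordan product in Equation \eqref{eqn: Lie and Jordan products}, so $R(\mathrm{d}l^+_{\mathbf{a}},\mathrm{d}l^+_{\mathbf{b}})=R(\mathrm{d}l^+_{\mathbf{b}},\mathrm{d}l^+_{\mathbf{a}})$.

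\emph{Positivity} is the heart of the matter. Fix $\omega\in\mathcal{O}$ and a cotangent vector represented by $\mathbf{a}\in\appas$. Then $\left(R(\mathrm{d}l^+_{\mathbf{a}},\mathrm{d}l^+_{\mathbf{a}})\right)(\omega)=\omega(\{\mathbf{a},\mathbf{a}\})=\omega(\mathbf{a}^2)=\omega(\mathbf{a}^\dagger\mathbf{a})\ge 0$, since $\mathbf{a}$ is self-adjoint and $\omega$ is positive. So $R$ is a positive semidefinite symmetric form on $T_\omega^*\mathcal{O}$; what must be shown is that it is actually definite on $T_\omega^*\mathcal{O}$, i.e.\ that $\omega(\mathbf{a}^2)=0$ forces $\mathrm{d}l^+_{\mathbf{a}}(\omega)=0$ as a covector on $\mathcal{O}$, equivalently that $\mathbf{a}$ annihilates $T_\omega\mathcal{O}$. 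Now $\omega(\mathbf{a}^\dagger\mathbf{a})=0$ means exactly $\mathbf{a}\in N_\omega$, the Gel'fand ideal. By Proposition \ref{prop: elements in the left ideal generated by a positive linear functionals  have a nice property}, for every $\mathbf{c}\in\appas$ we then have $\omega(\{\mathbf{a},\mathbf{c}\})=\tfrac12\big(\omega(\mathbf{a}\mathbf{c})+\omega(\mathbf{c}\mathbf{a})\big)=0$ and likewise $\omega([[\mathbf{a},\mathbf{c}]])=0$. Hence, using the description of $T_\omega\mathcal{O}$ as the span of the vectors $\omega_{\mathbf{c}\mathbf{d}}$ with $\omega_{\mathbf{c}\mathbf{d}}(\mathbf{e})=\omega(\{\mathbf{c},\mathbf{e}\}+[[\mathbf{d},\mathbf{e}]])$, the pairing of $\mathbf{a}$ (as a covector) with any tangent vector $\omega_{\mathbf{c}\mathbf{d}}$ is $\omega_{\mathbf{c}\mathbf{d}}(\mathbf{a})=\omega(\{\mathbf{c},\mathbf{a}\}+[[\mathbf{d},\mathbf{a}]])=0$. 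Thus $\mathbf{a}\in Ann(T_\omega\mathcal{O})$, so $\mathrm{d}l^+_{\mathbf{a}}(\omega)=0$ in $T_\omega^*\mathcal{O}$. This shows $R$ is positive definite on each cotangent space.

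\emph{Invertibility} then follows formally: a positive definite symmetric bilinear form on the finite-dimensional vector space $T_\omega^*\mathcal{O}$ is nondegenerate, hence induces an isomorphism $T_\omega^*\mathcal{O}\to T_\omega\mathcal{O}$, and this holds smoothly in $\omega$ since $R$ is a smooth tensor field; the inverse is the metric tensor $\Gg$ referred to in the surrounding text. The one point requiring a little care — and the step I expect to be the main obstacle — is the bookkeeping of the overcomplete basis: $R$ is initially presented on the overcomplete generating set $\{\mathrm{d}l^+_{\mathbf{a}}\}_{\mathbf{a}\in\appas}$, so one must check that it is well defined as a tensor on the genuine cotangent space $T_\omega^*\mathcal{O}=\appas/Ann(T_\omega\mathcal{O})$, i.e.\ that $\mathbf{a}\in Ann(T_\omega\mathcal{O})$ implies $\omega(\{\mathbf{a},\mathbf{b}\})=0$ for all $\mathbf{b}$. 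But this is precisely the positive-semidefiniteness computation above combined with Cauchy--Schwarz for the form $(,)_\omega$: if $\mathbf{a}$ annihilates $T_\omega\mathcal{O}$ then in particular $0=\omega_{\mathbf{a}\mathbf{0}}(\mathbf{a})=\omega(\{\mathbf{a},\mathbf{a}\})=\omega(\mathbf{a}^\dagger\mathbf{a})$, so $\mathbf{a}\in N_\omega$ and $\omega(\{\mathbf{a},\mathbf{b}\})=0$ for all $\mathbf{b}$ by Proposition \ref{prop: elements in the left ideal generated by a positive linear functionals  have a nice property}. Hence $R$ descends, and the descended form is exactly the positive definite form analyzed above, completing the proof.
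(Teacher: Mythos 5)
Your proof is correct and follows essentially the same route as the paper: positive semidefiniteness from $\omega(\mathbf{a}^2)\geq 0$, then definiteness by showing that $\omega(\mathbf{a}^2)=0$ puts $\mathbf{a}$ in the Gel'fand ideal $N_\omega$, whence Proposition \ref{prop: elements in the left ideal generated by a positive linear functionals  have a nice property} forces $\mathbf{a}$ to annihilate every tangent vector $\omega_{\mathbf{c}\mathbf{d}}$ (the paper phrases this via the fundamental vector fields $\mathbf{V}_{\mathbf{c}\mathbf{d}}$ spanning $T_\omega\mathcal{O}$, which is the same thing). Your additional check that $R$ descends to the quotient $T_\omega^*\mathcal{O}=\appas/Ann(T_\omega\mathcal{O})$ is a worthwhile piece of bookkeeping that the paper leaves implicit, but it does not change the substance of the argument.
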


\begin{proof}

By definition of $R$, a cotangent vector $\mathrm{d}l_{\mathbf{c}}^{+}(\omega)$ at $\omega\in\mathcal{O}$ is such that
\be
R_{\omega}(\mathrm{d}l_{\mathbf{c}}^{+}(\omega),\mathrm{d}l_{\mathbf{c}}^{+}(\omega))\,=\,\omega(\mathbf{c}^{2})\,\geq\,0
\ee
because $\omega$ is a positive linear functional.
Recalling the definition of the GNS ideal $N_{\omega}$  of $\omega$ given in Equation \eqref{eqn: left ideal generated by positive linear functional}, and recalling that $\mathbf{c} $ is self-adjoint, it is immediate to see that the equality in the previous equation  holds if and only if $\mathrm{d}l_{\mathbf{c}}^{+}(\omega)= \mathbf{c} \,\in\,N_{\omega}\,\cap\,\appas\;\forall\,\omega\in\mathcal{O}$.
Moreover, given any tangent vector $v_{\omega}$ at $\omega\in\mathcal{O}$, we may find a fundamental vector field $\mathbf{V}_{\mathbf{a}\mathbf{b}}$ such that $v_{\omega}=\mathbf{V}_{\mathbf{a}\mathbf{b}}(\omega)$ because $\mathcal{O}$ is a homogeneous space for $\gapp$.
Therefore, we have
\be
\langle\mathrm{d}l_{\mathbf{c}}^{+}(\omega), v_{\omega}\rangle\,=\,\langle \mathrm{d}l_{\mathbf{c}}^{+}(\omega),\mathbf{V}_{\mathbf{a}\mathbf{b}}(\omega)\rangle\,=\,\omega(\{\mathbf{a},\mathbf{c}\}) + \omega([[\mathbf{b},\mathbf{c}]])\,=\,0,
\ee
where the last equality follows from the fact that $\mathbf{c}$ is in $N_{\omega}$ and from Proposition \ref{prop: elements in the left ideal generated by a positive linear functionals  have a nice property}.
Then, since the tangent vector $v_{\omega}$ was arbitrary, we conclude that the cotangent vector $\mathrm{d}l_{\mathbf{c}}^{+}(\omega)$ must be the zero cotangent vector, and thus it follows that $R$ is positive and invertible on $\mathcal{O}\subset\pos$.

\end{proof}

Because of Proposition \ref{prop: R is invertible and positive}, the covariant tensor
\be
\Gg\,:=\,R^{-1}
\ee
is  a Riemannian metric tensor on the orbit $\mathcal{O}\subset\pos$.
We can immediately compute the gradient vector field $W_{\mathbf{a}}$ associated with the function $l_{\mathbf{a}}^{+}$ by means of $\Gg$.
In order to characterize $W_{\mathbf{a}}$, it is sufficient to obtain its action on all the functions $l_{\mathbf{b}}^{+}$ with $\mathbf{b}\in\appas$ because the set $\{\mathrm{d}l_{\mathbf{b}}^{+}(\omega)\}_{\mathbf{b}\in\appas}$ is an overcomplete basis for $T_{\omega}^{*}\mathcal{O}$ for every $\omega\in\mathcal{O}$.
By definition of gradient vector field, we have
\be
W_{\mathbf{a}}(l_{\mathbf{b}}^{+})\,=\,R(\mathrm{d}l_{\mathbf{a}}^{+},\mathrm{d}l_{\mathbf{b}}^{+})\,=\,i_{+}^{*}\,\left(\mathcal{R}(\mathrm{d}l_{\mathbf{a}},\mathrm{d}l_{\mathbf{b}})\right)\,=\,i_{+}^{*}\left(Y_{\mathbf{a}}(l_{\mathbf{b}})\right),
\ee
from which we obtain that
\be
W_{\mathbf{a}}\,=\,Y_{\mathbf{a}}
\ee
with $Y_{\mathbf{a}}$ the fundamental vector field $\mathbf{V}_{\mathbf{a}\mathbf{0}}$ of the action $\alpha$ of $\gapp$.
Consequently, we have
\be\label{eqn: metric between gradient vector fields on positive linear functionals}
\Gg(Y_{\mathbf{a}},Y_{\mathbf{b}})\,=\,l_{\{\mathbf{a},\mathbf{b}\}}^{+}\,,
\ee
and  the fact that $Y_{\mathbf{a}}$ is the gradient vector field associated with $l_{\mathbf{a}}^{+}$ by means of $\Gg$ explains why we already called it a gradient vector field when we defined it in Section \ref{sec: states}.
Furthermore, since the set $\{\mathrm{d}l_{\mathbf{b}}^{+}(\omega)\}_{\mathbf{b}\in\appas}$ is an overcomplete basis for $T_{\omega}^{*}\mathcal{O}$ for every $\omega\in\mathcal{O}$,   the set $\{Y_{\mathbf{b}}(\omega)\}_{\mathbf{b}\in\appas}$ is an overcomplete basis for $T_{\omega} \mathcal{O}$ for every $\omega\in\mathcal{O}$.

By directly applying the definition of a gradient vector field, we have
\be\label{eqn: metric between gradient vector fields and Hamiltonian vector fields on positive linear functionals}
\Gg(Y_{\mathbf{a}},X_{\mathbf{b}})\,=\,\mathrm{d}l_{\mathbf{a}}^{+}(X_{\mathbf{b}})\,=\,X_{\mathbf{b}}l_{\mathbf{a}}^{+}\,=\,l_{[[\mathbf{b},\mathbf{a}]]}^{+}\,.
\ee
Furthermore, we have
\be
\begin{split}
X_{\mathbf{a}}\left(\Gg(Y_{\mathbf{b}},Y_{\mathbf{c}})\right)&\,=\,\left(\mathcal{L}_{X_{\mathbf{a}}}\Gg\right)(Y_{\mathbf{b}},Y_{\mathbf{c}}) + \Gg\left([X_{\mathbf{a}},Y_{\mathbf{b}}],Y_{\mathbf{c}}\right) + \Gg\left(Y_{\mathbf{b}},[X_{\mathbf{a}},Y_{\mathbf{c}}]\right)\,=\ \\
&\,=\,\left(\mathcal{L}_{X_{\mathbf{a}}}\Gg\right)(Y_{\mathbf{b}},Y_{\mathbf{c}}) + \Gg\left(Y_{[[\mathbf{a},\mathbf{b}]]},Y_{\mathbf{c}}\right) + \Gg\left(Y_{\mathbf{b}},Y_{[[\mathbf{a},\mathbf{c}]]}\right)\,=\, \\
&\,=\,\left(\mathcal{L}_{X_{\mathbf{a}}}\Gg\right)(Y_{\mathbf{b}},Y_{\mathbf{c}}) + l_{\{[[\mathbf{a},\mathbf{b}]],\mathbf{c}\}}^{+} + l_{\{\mathbf{b},[[\mathbf{a},\mathbf{c}]]\}}^{+}
\end{split}
\ee
where $\mathcal{L}_{X_{\mathbf{a}}}$ denotes the Lie derivative (see \cite{A-M-R-1988,Jost-2017}), and where we used Equation \eqref{eqn: metric between gradient vector fields on positive linear functionals} and Equation \eqref{eqn: commutator for fundamental vector fields on positive linear functionals}.
However, it also holds
\be\label{eqn: scalar product between gradient and Hamiltonian vector fields on positive linerar functionals}
X_{\mathbf{a}}\left(\Gg(Y_{\mathbf{b}},Y_{\mathbf{c}})\right)\,=\,X_{\mathbf{a}}\,l_{\{\mathbf{b},\mathbf{c}\}}^{+}\,=\,l_{\{[[\mathbf{a},\mathbf{b}]],\mathbf{c}\}}^{+} + l_{\{\mathbf{b},[[\mathbf{a},\mathbf{c}]]\}}^{+},
\ee
where we used Equation \eqref{eqn: metric between gradient vector fields on positive linear functionals}, the first equality in Equation \eqref{eqn: linear Hamiltonian and gradient vector fields on positive linear functionals}, and the fact that the Lie product is a derivation of the Jordan product.
Therefore, comparing the previous two equations, and recalling that the Jordan product is symmetric, we obtain
\be\label{eqn: linear Hamiltonian vector fields preserve the  metric on positive linear functionals}
\mathcal{L}_{X_{\mathbf{a}}}\Gg\,=\,0
\ee
for all $\mathbf{a}\in\appas$.
This means that the metric $\Gg$ is invariant under the action of the unitary group $\mathscr{U}$ defined by the Hamiltonian vector fields associated with the functions $l_{\mathbf{a}}^{+}$ with $\mathbf{a}\in\appas$.

Regarding the evaluation of $\Gg$ on Hamiltonian vector fields, we can say the following.
First of all, we note that the set $\{Y_{\mathbf{b}}(\omega)\}_{\mathbf{b}\in\appas}$ is an overcomplete basis for $T_{\omega} \mathcal{O}$ for every $\omega\in\mathcal{O}$, therefore, given the Hamiltonian vector field $X_{\mathbf{b}}$, we can express the tangent vector $X_{\mathbf{b}}(\omega)$ in terms of gradient tangent vectors, that is, we have
\be
X_{\mathbf{b}}(\omega)\,=\,Y_{B_{\omega}^{\mathbf{b}}}(\omega),
\ee
where $B_{\omega}^{\mathbf{b}}$ is an element of $\appas$ that depends on both $\mathbf{b}$ and $\omega$.
Specifically, $B_{\omega}^{\mathbf{b}}$ is the element in $\appas$ such that 
\be\label{eqn: from Hamiltonian to gradient pointwise}
\omega([[\mathbf{b},\mathbf{c}]])\,=\,\omega(\{B_{\omega}^{\mathbf{b}},\mathbf{c}\})
\ee
for all $\mathbf{c}\in\appas$.
In general, $B_{\omega}^{\mathbf{b}}$ is not unique, but it is defined up to an element in the isotropy algebra $\mathfrak{g}_{\omega}$ of $\omega$.
However, since the gradient tangent vector $Y_{\mathbf{a}}(\omega)$ vanishes for every $\mathbf{a}\in\appas\cap\mathfrak{g}_{\omega}$, we conclude that the non-uniqueness of $B_{\omega}^{\mathbf{b}}$ does not affect $Y_{B_{\omega}^{\mathbf{b}}}(\omega)$.
Consequently, we may write
\be\label{eqn: metric between Hamiltonian vector fields and Hamiltonian vector fields on positive linear functionals}
\begin{split}
\left(\Gg(X_{\mathbf{a}},X_{\mathbf{b}})\right)(\omega)&\,=\,\Gg_{\omega}\left(X_{\mathbf{a}}(\omega),X_{\mathbf{b}}(\omega)\right)\,=\, \\
&\,=\,\Gg_{\omega}\left(X_{\mathbf{a}}(\omega),Y_{B_{\omega}^{\mathbf{b}}}(\omega)\right)\,=\, \\
&\,=\,l_{[[\mathbf{a},B_{\omega}^{\mathbf{b}}]]}^{+}(\omega)\,=\,\omega\left([[\mathbf{a},B_{\omega}^{\mathbf{b}}]]\right)\,.
\end{split}
\ee
Note that, unlike Equations \eqref{eqn: metric between gradient vector fields on positive linear functionals} and \eqref{eqn: metric between gradient vector fields and Hamiltonian vector fields on positive linear functionals}, Equation \eqref{eqn: metric between Hamiltonian vector fields and Hamiltonian vector fields on positive linear functionals} expresses a pointwise relation.

Now, we will compute the covariant derivative associated with the metric $\Gg$.
By applying the Koszul formula (see \cite{Jost-2017} (thm. 4.3.1)), we obtain
\be
\begin{split}
\nabla_{Y_{\mathbf{a}}}Y_{\mathbf{b}}(l^{+}_{\mathbf{c}})&\,=\,\Gg(\nabla_{Y_{\mathbf{a}}}Y_{\mathbf{b}},Y_{\mathbf{c}})\,=\,\\
&\,=\,\frac{1}{2}\left(Y_{\mathbf{a}}\left(\Gg(Y_{\mathbf{b}},Y_{\mathbf{c}})\right) + Y_{\mathbf{b}}\left(\Gg(Y_{\mathbf{a}},Y_{\mathbf{c}})\right) - Y_{\mathbf{c}}\left(\Gg(Y_{\mathbf{a}},Y_{\mathbf{b}})\right) + \right.\\
&\;\;\; \left.+ \;\Gg\left([Y_{\mathbf{a}},Y_{\mathbf{b}}],Y_{\mathbf{c}}\right) - \Gg\left([Y_{\mathbf{a}},Y_{\mathbf{c}}],Y_{\mathbf{b}}\right) - \Gg\left([Y_{\mathbf{b}},Y_{\mathbf{c}}],Y_{\mathbf{a}}\right)\right)\,=\, \\
&\,=\,\frac{1}{2}\left(l^{+}_{\{\mathbf{a},\{\mathbf{b},\mathbf{c}\}\}} + l^{+}_{\{\mathbf{b},\{\mathbf{a},\mathbf{c}\}\}} - l^{+}_{\{\mathbf{c},\{\mathbf{a},\mathbf{b}\}\}} + l^{+}_{[[[[\mathbf{b},\mathbf{a}]],\mathbf{c}]]} - l^{+}_{[[[[\mathbf{c},\mathbf{a}]],\mathbf{b}]]} - l^{+}_{[[[[\mathbf{c},\mathbf{b}]],\mathbf{a}]]}\right),
\end{split}
\ee
where we used Equation \eqref{eqn: metric between gradient vector fields and Hamiltonian vector fields on positive linear functionals}, and Equation \eqref{eqn: scalar product between gradient and Hamiltonian vector fields on positive linerar functionals}.
Expanding the Lie and Jordan products according to Equation \eqref{eqn: Lie and Jordan products}, some tedious but simple manipulations show that 
\be
\{\mathbf{a},\{\mathbf{b},\mathbf{c}\}\} + \{\mathbf{b},\{\mathbf{a},\mathbf{c}\}\} -  [[[[\mathbf{c},\mathbf{a}]],\mathbf{b}]] -  [[[[\mathbf{c},\mathbf{b}]],\mathbf{a}]]\,=\,2\{\{\mathbf{a},\mathbf{b}\},\mathbf{c}\},
\ee
from which it follows that
\be
\begin{split}
\nabla_{Y_{\mathbf{a}}}Y_{\mathbf{b}}(l^{+}_{\mathbf{c}})&\,=\,\frac{1}{2}\left(l^{+}_{\{\mathbf{c},\{\mathbf{a},\mathbf{b}\}\}} + l^{+}_{[[[[\mathbf{b},\mathbf{a}]],\mathbf{c}]]}\right)\,=\,\\
&\,=\,\frac{1}{2}\left(Y_{\{\mathbf{a},\mathbf{b}\}}(l^{+}_{\mathbf{c}}) - X_{[[\mathbf{a},\mathbf{b}]]}(l^{+}_{\mathbf{c}})\right),
\end{split}
\ee
where we used Equation \eqref{eqn: linear Hamiltonian and gradient vector fields on positive linear functionals}.
Eventually, we may write the covariant derivative of a gradient vector field with respect to another gradient vector field as
\be\label{eqn: covariant derivative}
\nabla_{Y_{\mathbf{a}}}Y_{\mathbf{b}}\,=\,\frac{1}{2}\left(Y_{\{\mathbf{a},\mathbf{b}\}}  - X_{[[\mathbf{a},\mathbf{b}]]} \right)
\ee

At this point, we may compute the Riemann curvature tensor of $\Gg$ as
\be
\begin{split}
\Gg\left(R_{\Gg}(Y_{\mathbf{a}},Y_{\mathbf{b}})Y_{\mathbf{c}},Y_{\mathbf{d}}\right)&\,=\,  \Gg\left(\nabla_{Y_{\mathbf{a}}}\nabla_{Y_{\mathbf{b}}}Y_{\mathbf{c}},Y_{\mathbf{d}}\right) - \Gg\left(\nabla_{Y_{\mathbf{b}}}\nabla_{Y_{\mathbf{a}}}Y_{\mathbf{c}},Y_{\mathbf{d}}\right) - \Gg\left(\nabla_{[Y_{\mathbf{a}},Y_{\mathbf{b}}]}Y_{\mathbf{c}},Y_{\mathbf{d}}\right) \,.
\end{split}
\ee
A direct application of Equation  \eqref{eqn: covariant derivative}  gives
\be
\begin{split}
\nabla_{Y_{\mathbf{a}}}\nabla_{Y_{\mathbf{b}}}Y_{\mathbf{c}}&\,=\,\frac{1}{4}\left(Y_{\{\mathbf{a},\{\mathbf{b},\mathbf{c}\}\}} - X_{[[\mathbf{a},\{\mathbf{b},\mathbf{c}\}]]} - 2 \nabla_{Y_{\mathbf{a}}}X_{[[\mathbf{b},\mathbf{c}]]}\right) \\
\nabla_{Y_{\mathbf{b}}}\nabla_{Y_{\mathbf{a}}}Y_{\mathbf{c}}&\,=\,\frac{1}{4}\left(Y_{\{\mathbf{b},\{\mathbf{a},\mathbf{c}\}\}} - X_{[[\mathbf{b},\{\mathbf{a},\mathbf{c}\}]]} - 2 \nabla_{Y_{\mathbf{b}}}X_{[[\mathbf{a},\mathbf{c}]]}\right) \\
\nabla_{[Y_{\mathbf{a}},Y_{\mathbf{b}}]}Y_{\mathbf{c}}&\,=\, \nabla_{X_{[[\mathbf{b},\mathbf{a}]]}}Y_{\mathbf{c}}\,=\,\nabla_{Y_{\mathbf{c}}}X_{[[\mathbf{b},\mathbf{a}]]} + Y_{[[[[\mathbf{b},\mathbf{a}]],\mathbf{c}]]}\,,
\end{split}
\ee
and thus
\be\label{eqn: Riemann positive linear functionals 1}
\begin{split}
\Gg\left(R_{\Gg}(Y_{\mathbf{a}},Y_{\mathbf{b}})Y_{\mathbf{c}},Y_{\mathbf{d}}\right)&\,=\,\frac{1}{4}\Gg\left(X_{[[\mathbf{b},\{\mathbf{a},\mathbf{c}\}]]} - X_{[[\mathbf{a},\{\mathbf{b},\mathbf{c}\}]]} - Y_{\{\mathbf{b},\{\mathbf{a},\mathbf{c}\}\}} + X_{[[\mathbf{b},\{\mathbf{a},\mathbf{c}\}]]} + 2Y_{[[[[\mathbf{a},\mathbf{b}]],\mathbf{c}]]},Y_{\mathbf{d}} \right)+ \\
&\;\;\;\; + \frac{1}{2}\Gg\left(\nabla_{Y_{\mathbf{b}}}X_{[[\mathbf{a},\mathbf{c}]]} -  \nabla_{Y_{\mathbf{a}}}X_{[[\mathbf{b},\mathbf{c}]]}  + 2\nabla_{Y_{\mathbf{c}}}X_{[[\mathbf{a},\mathbf{b}]]},Y_{\mathbf{d}}\right).
\end{split}  
\ee
In order to compute the scalar products in the last line of the previous equation, we start noting that, since $\nabla$ is compatible with $\Gg$, we have
\be
Y_{\mathbf{a}}\left(\Gg(X_{\mathbf{b}},Y_{\mathbf{c}})\right)\,=\,\Gg\left(\nabla_{Y_{\mathbf{a}}}X_{\mathbf{b}},Y_{\mathbf{c}}\right) + \Gg\left(X_{\mathbf{b}},\nabla_{Y_{\mathbf{a}}}Y_{\mathbf{c}}\right)
\ee
for every $Y_{\mathbf{a}},X_{\mathbf{b}},Y_{\mathbf{c}}$, which implies
\be\label{eqn: important equation}
\begin{split}
\Gg\left(\nabla_{Y_{\mathbf{a}}}X_{\mathbf{b}},Y_{\mathbf{c}}\right)&\,=\,l^{+}_{\{\mathbf{a},[[\mathbf{b},\mathbf{c}]]\}} -\frac{1}{2}\Gg\left(X_{\mathbf{b}},Y_{\{\mathbf{a},\mathbf{c}\}}\right) + \frac{1}{2}\Gg\left(X_{\mathbf{b}},X_{[[\mathbf{a},\mathbf{c}]]}\right)\,=\,\\
&\,=\, \frac{1}{2}l^{+}_{\{\mathbf{a},[[\mathbf{b},\mathbf{c}]]\}} - \frac{1}{2}l^{+}_{\{\mathbf{c},[[\mathbf{b},\mathbf{a}]]\}} + \frac{1}{2}\Gg\left(X_{\mathbf{b}},X_{[[\mathbf{a},\mathbf{c}]]}\right)\,=\,\\ 
&\,=\, \frac{1}{2}\Gg\left(Y_{[[\mathbf{b},\mathbf{c}]]},Y_{\mathbf{a}}\right) - \frac{1}{2}\Gg\left(Y_{[[\mathbf{b},\mathbf{a}]]},Y_{\mathbf{c}}\right) + \frac{1}{2}\Gg\left(X_{\mathbf{b}},X_{[[\mathbf{a},\mathbf{c}]]}\right).
\end{split}
\ee
Therefore, after exploiting Equation \eqref{eqn: important equation} to rewrite the scalar products in the last line of Equation \eqref{eqn: Riemann positive linear functionals 1}, and after a simple manipulation involving the Jacobi identity for the Lie product $[[,]]$ (see \cite{F-F-I-M-2013}), we obtain
\be
\begin{split}
\Gg\left(R_{\Gg}(Y_{\mathbf{a}},Y_{\mathbf{b}})Y_{\mathbf{c}},Y_{\mathbf{d}}\right)&\,=\,\frac{1}{4}\Gg\left(X_{[[\mathbf{b},\{\mathbf{a},\mathbf{c}\}]]},Y_{\mathbf{d}} \right) - \frac{1}{4}\Gg\left(X_{[[\mathbf{a},\{\mathbf{b},\mathbf{c}\}]]},Y_{\mathbf{d}} \right) - \frac{1}{4}\Gg\left(Y_{\{\mathbf{b},\{\mathbf{a},\mathbf{c}\}\}},Y_{\mathbf{d}} \right) + \\
&\;\;\;\; + \frac{1}{4}\Gg\left(X_{[[\mathbf{b},\{\mathbf{a},\mathbf{c}\}]]},Y_{\mathbf{d}} \right) + \frac{1}{4}\Gg\left(Y_{[[[[\mathbf{a},\mathbf{b}]],\mathbf{c}]]},Y_{\mathbf{d}} \right)+  \frac{1}{4}\Gg\left(Y_{[[[[\mathbf{a},\mathbf{c}]],\mathbf{d}]]},Y_{\mathbf{b}}\right)+    \\
& \;\;\;\;+ \frac{1}{2}\Gg\left(Y_{[[[[\mathbf{a},\mathbf{b}]],\mathbf{d}]]},Y_{\mathbf{c}}\right)  -\frac{1}{4}\Gg\left(Y_{[[[[\mathbf{b},\mathbf{c}]],\mathbf{d}]]},Y_{\mathbf{a}}\right)  +   \\
&\;\;\;\; + \frac{1}{2}\Gg(X_{[[\mathbf{a},\mathbf{b}]]},X_{[[\mathbf{c},\mathbf{d}]]}) +\frac{1}{4}\Gg(X_{[[\mathbf{a},\mathbf{c}]]},X_{[[\mathbf{b},\mathbf{d}]]}) - \frac{1}{4}\Gg(X_{[[\mathbf{b},\mathbf{c}]]},X_{[[\mathbf{a},\mathbf{d}]]})   
\end{split}  
\ee
Now, a tedious but straightforward computation based on the expansion of all the Jordan and Lie products according to Equation \eqref{eqn: Lie and Jordan products} shows that 
\be
\begin{split}
\Gg(R_{\Gg}(Y_{\mathbf{a}},Y_{\mathbf{b}})Y_{\mathbf{c}},Y_{\mathbf{d}})&\,=\,\frac{1}{4}\Gg\left(Y_{[[\mathbf{a},\mathbf{d}]]},Y_{[[\mathbf{b},\mathbf{c}]]}\right) - \frac{1}{4}\Gg\left(Y_{[[\mathbf{a},\mathbf{c}]]},Y_{[[\mathbf{b},\mathbf{d}]]}\right) - \frac{1}{2}\Gg\left(Y_{[[\mathbf{a},\mathbf{b}]]},Y_{[[\mathbf{c},\mathbf{d}]]}\right) + \\
&\;\;\;\;   + \frac{1}{2}\Gg(X_{[[\mathbf{a},\mathbf{b}]]},X_{[[\mathbf{c},\mathbf{d}]]}) +\frac{1}{4}\Gg(X_{[[\mathbf{a},\mathbf{c}]]},X_{[[\mathbf{b},\mathbf{d}]]}) - \frac{1}{4}\Gg(X_{[[\mathbf{b},\mathbf{c}]]},X_{[[\mathbf{a},\mathbf{d}]]})  
\end{split}
\ee

Now, the sectional curvature $K_{\Gg}$  of $\Gg$ (see \cite{Jost-2017} for the definition) reads
\be
\begin{split}
K_{\Gg}(Y_{\mathbf{a}},Y_{\mathbf{b}})&\,=\,\frac{1}{\mathbf{N}}\,\Gg\left(R_{\Gg}(Y_{\mathbf{a}},Y_{\mathbf{b}})Y_{\mathbf{b}},Y_{\mathbf{a}}\right) \,,
\end{split}
\ee
where 
\be\label{eqn: normalization factor sectional curvature on positive linear functionals}
\mathbf{N}\,:=\,\Gg(Y_{\mathbf{a}},Y_{\mathbf{a}})\,\Gg(Y_{\mathbf{b}},Y_{\mathbf{b}}) - \left(\Gg\left(Y_{\mathbf{a}},Y_{\mathbf{b}}\right)\right)^{2}\,=\,l^{+}_{a^{2}}\,l^{+}_{b^{2}} - \left(l^{+}_{\{\mathbf{a},\mathbf{b}\}}\right)^{2}\,,
\ee 
and thus we obtain
\be
K_{\Gg}(Y_{\mathbf{a}},Y_{\mathbf{b}})\,=\,\frac{3}{4\mathbf{N}}\left(\Gg\left(Y_{[[\mathbf{a},\mathbf{b}]]},Y_{[[\mathbf{a},\mathbf{b}]]}\right) -  \Gg\left(X_{[[\mathbf{a},\mathbf{b}]]},X_{[[\mathbf{a},\mathbf{b}]]}\right)\right)  .
\ee
Note that,   when $\appa$ is Abelian,  both the sectional curvature $K_{\Gg}$ and the Riemann curvature $R_{\Gg}$ identically vanish for every orbit $\mathcal{O}$.

\section{Riemannian Metrics on Manifolds of States}\label{subsec: states}

Since we have the Riemannian metric $\Gg$ on the orbit $\mathcal{O}\subset\pos$, we may consider the orbit $\mathcal{O}_{1}\subset\mathcal{O}$ and pull back $\Gg$ to $\mathcal{O}_{1}$ by means of the immersion map $i_{1+}$ obtaining the Riemannian metric tensor 
\be
\Gg_{1}\,:=\,i_{1+}^{*}\,\Gg\,.
\ee
In the following, we will see that this metric tensor determined by the Jordan product ``becomes'' the Fisher--Rao metric tensor, the Fubini--Study  metric tensor, or the Bures--Helstrom  metric tensor when we make suitable choices for the explicit form of $\appa$ and $\mathcal{O}$. 

Recalling the definition of the vector fields $\mathbb{Y}_{\mathbf{a}}$ and $\mathbb{X}_{\mathbf{a}}$ given in Section \ref{sec: states} below Equation \eqref{eqn: widetilde gradient}, we immediately obtain
\be\label{eqn: action of metric tensor on vector fields on states}
\begin{split}
\Gg_{1}(\mathbb{Y}_{\mathbf{a}},\mathbb{Y}_{\mathbf{b}}) & \,=\,i^{*}_{1+}\left(\Gg(\widetilde{Y}_{\mathbf{a}},\widetilde{Y}_{\mathbf{b}})\right)\,=\, \ev_{\{\mathbf{a},\mathbf{b}\}} - \ev_{\mathbf{a}}\,\ev_{\mathbf{b}}  \\
\Gg_{1}\left(\mathbb{Y}_{\mathbf{a}},\,\mathbb{X}_{\mathbf{b}}\right)&\,=\, i^{*}_{1+}\left(\Gg(\widetilde{Y}_{\mathbf{a}},X_{\mathbf{b}})\right)\,=\,\ev_{[[\mathbf{b},\mathbf{a}]]} \\
\left(G_{1}(\mathbb{X}_{\mathbf{a}},\mathbb{X}_{\mathbf{b}})\right)(\rho)&\,=\, \left(i^{*}_{1+}\left(\Gg(X_{\mathbf{a}},X_{\mathbf{b}})\right)\right)(\rho)\,=\,\ev_{[[\mathbf{a},B_{\rho}^{\mathbf{b}}]]}(\rho)\,=\,\rho\left([[\mathbf{a},B_{\rho}^{\mathbf{b}}]]\right)\,.
\end{split}
\ee
The set $\{\mathbb{Y}_{\mathbf{b}}(\rho)\}_{\mathbf{b}\in\appas}$ is an overcomplete basis for $T_{\rho} \mathcal{O}_{1}$ for every $\rho\in\mathcal{O}_{1}$, and from the first line of Equation \eqref{eqn: action of metric tensor on vector fields on states} and the second relation in Equation \eqref{eqn: action of fundamental vector fields on expectation value functions}, we have that
\be
\mathrm{d}\ev_{\mathbf{a}}\,=\,G_{1}(\mathbb{Y}_{\mathbf{a}},\,\cdot),
\ee
which means that $\mathbb{Y}_{\mathbf{a}}$ is the gradient vector field associated with $\ev_{\mathbf{a}}$ by means of the Riemannian metric $G_{1}$, and this explains why we already called them gradient vector fields in Section \ref{sec: states}.

By adapting the proof used for $\Gg$, we may prove that  the Hamiltonian vector fields preserve $\Gg_{1}$, that is, we may prove that
\be\label{eqn: Hamiltonian vector fields preserve G1 III}
\mathcal{L}_{\mathbb{X}_{\mathbf{a}}}\,\Gg_{1}\,=\,0
\ee
for all $\mathbf{a}\in\appas$.
We therefore conclude that $\Gg_{1}$ is invariant with respect to the action of the unitary group $\mathscr{U}$ on $\mathcal{O}_{1}$ obtained by restricting $\Phi$ to $\mathscr{U}\subset\gapp$.
 
Now, we will compute the covariant derivative $\nabla^{1}$ associated with $\Gg_{1}$.
By applying the Koszul formula (see \cite{Jost-2017} (thm. 4.3.1)) and proceeding as we did in obtaining Equation \eqref{eqn: covariant derivative}, we obtain
\be\label{eqn: covariant derivative states}
\nabla^{1}_{\mathbb{Y}_{\mathbf{a}}}\mathbb{Y}_{\mathbf{b}}\,=\,\frac{1}{2}\left(\mathbb{Y}_{\{\mathbf{a},\mathbf{b}\}} - \ev_{\mathbf{a}}\mathbb{Y}_{\mathbf{b}} - \ev_{\mathbf{b}}\mathbb{Y}_{\mathbf{a}}  - \mathbb{X}_{[[\mathbf{a},\mathbf{b}]]} \right).
\ee
By definition of $\Gg_{1}$, the canonical immersion $i_{1+}\colon\mathcal{O}_{1}\lra\mathcal{O}$ is a Riemannian immersion between $(\mathcal{O}_{1},\Gg_{1})$ and $(\mathcal{O},\Gg)$.
Consequently, recalling that $\mathbb{Y}_{\mathbf{a}}$ is $i_{1+}$-related with $\widetilde{Y_{\mathbf{a}}}$, we have (see \cite{Besse-1987} (thm. 1.72))
\be
\widetilde{\nabla^{1}_{\mathbb{Y}_{\mathbf{a}}}\mathbb{Y}_{\mathbf{b}}}\,=\,\nabla_{\widetilde{Y_{\mathbf{a}}}}\widetilde{Y_{\mathbf{b}}} + \Pi(\widetilde{Y_{\mathbf{a}}},\widetilde{Y_{\mathbf{b}}}),
\ee 
where $\widetilde{\nabla_{\mathbb{Y}_{\mathbf{a}}}\mathbb{Y}_{\mathbf{b}}}$ is a vector field on $\mathcal{O}$ which is $i_{1+}$-related with $\nabla_{\mathbb{Y}_{\mathbf{a}}}\mathbb{Y}_{\mathbf{b}}$, and $\Pi$ is the second fundamental form of $\mathcal{O}_{1}$ in $\mathcal{O}$.
From Equation \eqref{eqn: widetilde gradient} and Equation \eqref{eqn: covariant derivative states}, it immediately follows that
\be
\widetilde{\nabla^{1}_{\mathbb{Y}_{\mathbf{a}}}\mathbb{Y}_{\mathbf{b}}}\,=\,\frac{1}{2}\left(\widetilde{Y}_{\{\mathbf{a},\mathbf{b}\}} - l^{+}_{\mathbf{a}}\widetilde{Y_{\mathbf{b}}} - l^{+}_{\mathbf{b}}\widetilde{Y_{\mathbf{a}}}  - X_{[[\mathbf{a},\mathbf{b}]]} \right),
\ee
while, from Equation \eqref{eqn: widetilde gradient} and Equation \eqref{eqn: covariant derivative}, it immediately follows that
\be
\nabla_{\widetilde{Y_{\mathbf{a}}}}\widetilde{Y_{\mathbf{b}}}\,=\,\frac{1}{2}\left(\widetilde{Y}_{\{\mathbf{a},\mathbf{b}\}} - l^{+}_{\mathbf{a}}\widetilde{Y_{\mathbf{b}}} - l^{+}_{\mathbf{b}}\widetilde{Y_{\mathbf{a}}} -\left(l^{+}_{\{\mathbf{a},\mathbf{b}\}} - l^{+}_{\mathbf{a}}\,l^{+}_{\mathbf{b}}\right)Y_{\mathbb{I}} - X_{[[\mathbf{a},\mathbf{b}]]} \right),
\ee
so that
\be
\Pi(\widetilde{Y_{\mathbf{a}}},\widetilde{Y_{\mathbf{b}}})\,=\,\widetilde{\nabla^{1}_{\mathbb{Y}_{\mathbf{a}}}\mathbb{Y}_{\mathbf{b}}} - \nabla_{\widetilde{Y_{\mathbf{a}}}}\widetilde{Y_{\mathbf{b}}}\,=\,\frac{1}{2}\left(l^{+}_{\{\mathbf{a},\mathbf{b}\}} - l^{+}_{\mathbf{a}}\,l^{+}_{\mathbf{b}}\right)Y_{\mathbb{I}} .
\ee 
This means that $\mathcal{O}_{1}$ is not a totally geodesic submanifold of $\mathcal{O}$.
Then, concerning the Riemann curvature tensor $R_{\Gg_{1}}$ associated with $\Gg_{1}$, we have the standard formula
$$
\Gg_{1}\left(R_{\Gg_{1}}(\mathbb{Y}_{\mathbf{a}},\mathbb{Y}_{\mathbf{b}})\mathbb{Y}_{\mathbf{c}},\mathbb{Y}_{\mathbf{d}}\right) =i_{1+}^{*}\left(\Gg\left(R_{\Gg}(\widetilde{Y_{\mathbf{a}}},\widetilde{Y_{\mathbf{b}}})\widetilde{Y_{\mathbf{c}}},\widetilde{Y_{\mathbf{d}}}\right) + \Gg\left(\Pi(\widetilde{Y_{\mathbf{a}}},\widetilde{Y_{\mathbf{d}}}),\Pi(\widetilde{Y_{\mathbf{b}}},\widetilde{Y_{\mathbf{c}}})\right)   -  \Gg\left(\Pi(\widetilde{Y_{\mathbf{a}}},\widetilde{Y_{\mathbf{c}}}),\Pi(\widetilde{Y_{\mathbf{b}}},\widetilde{Y_{\mathbf{d}}})\right)      \right),
$$
which becomes
\be
\begin{split}
\Gg_{1}\left(R_{\Gg_{1}}(\mathbb{Y}_{\mathbf{a}},\mathbb{Y}_{\mathbf{b}})\mathbb{Y}_{\mathbf{c}},\mathbb{Y}_{\mathbf{d}}\right)&\,=\,i_{1+}^{*}\left(\Gg\left(R_{\Gg}(Y_{\mathbf{a}},Y_{\mathbf{b}})Y_{\mathbf{c}}, Y_{\mathbf{d}}\right) \right) + \frac{\mathbf{N}_{1}^{abcd}}{4} ,
\end{split}
\ee
where
\be
\mathbf{N}_{1}^{abcd}\,=\,\Gg_{1}(\mathbb{Y}_{\mathbf{a}},\mathbb{Y}_{\mathbf{d}})\,\Gg_{1}(\mathbb{Y}_{\mathbf{b}},\mathbb{Y}_{\mathbf{c}}) - \Gg_{1}(\mathbb{Y}_{\mathbf{a}},\mathbb{Y}_{\mathbf{c}})\,\Gg_{1}(\mathbb{Y}_{\mathbf{b}},\mathbb{Y}_{\mathbf{d}}) \,.
\ee
From this, it is immediate to conclude that, unlike what happens for positive linear functionals, the Riemann curvature tensor of $\mathcal{O}_{1}$ does not vanish when $\appa$ is Abelian.
On the other hand, setting $\mathbf{N}_{1}\,:=\,\mathbf{N}_{1}^{abba}$, the sectional curvature $K_{G_{1}}$ of $\Gg_{1}$ is easily computed to be
\be
\begin{split}
K_{G_{1}}(\mathbb{Y}_{\mathbf{a}},\mathbb{Y}_{\mathbf{b}})&\,=\,\frac{\Gg_{1}\left(R_{\Gg_{1}}(\mathbb{Y}_{\mathbf{a}},\mathbb{Y}_{\mathbf{b}})\mathbb{Y}_{\mathbf{b}},\mathbb{Y}_{\mathbf{a}}\right)}{\mathbf{N}_{1}}\,=\, \\
&\,=\,\frac{1}{4} + \frac{3}{4\mathbf{N}_{1}}\left( \Gg_{1}\left(\mathbb{Y}_{[[\mathbf{a},\mathbf{b}]]},\mathbb{Y}_{[[\mathbf{a},\mathbf{b}]]}\right) + (\ev_{[[\mathbf{a},\mathbf{b}]]})^{2} - \Gg_{1}\left(\mathbb{X}_{[[\mathbf{a},\mathbf{b}]]},\mathbb{X}_{[[\mathbf{a},\mathbf{b}]]}\right)\right)\,,
\end{split}
\ee
and we see that    the sectional curvature of every $\mathcal{O}_{1}$ is constant and equal to $\frac{1}{4}$ when $\appa$ is Abelian.

\section{The Fisher--Rao Metric Tensor}\label{sec: FR}

Here, we will study the case in which $\appa$ is Abelian, and provide a link between the Riemannian metric tensor $\Gg_{1}$ and the Fisher--Rao metric tensor $\Gg_{FR}$.
First of all, we may consider the Abelian $C^*$-algebra $\appa=\mathbb{C}^{n}$ (with $n<\infty$) endowed with component-wise algebraic operations without loss of generality because, up to isomorphism, this is the unique finite-dimensional, Abelian $C^*$-algebra.
Then, we take the canonical basis $\{e^{j}\}_{j=1,...,n}$  in $\mathbb{C}^{n}$, so that every $\mathbf{a}\in\appa$ can be written as
\be
\mathbf{a}=a_{j}\,e^{j}
\ee
with $a_{j}\in\mathbb{C}$ for all $j=1,...,n$.
In particular, $\mathbf{a}$ is a self-adjoint element if and only if $a_{j}$ is real for all $j=1,...,n$.
By considering the dual basis  $\{e_{j}\}_{j=1,...,n}$ of $\{e^{j}\}_{j=1,...,n}$, we introduce its associated Cartesian coordinate system $\{p^{j}\}_{j=1,...,n}$ on $\stav$, and it is immediate to check that  $\stav$ may be identified with $\mathbb{R}^{n}$, the cone $\pos$ may be identified with the positive orthant $\mathbb{P}^{n}$ in $\mathbb{R}^{n}$ (without the zero), and the space of states $\stsp$ may be identified with the standard $n$-simplex $\Delta^{n}$ in $\mathbb{R}^{n}$.

Concerning the orbits of $\gapp$, we first fix a reference element $\omega_{*}\in\pos$ such that $p^{j}(\omega_{*})\neq 0$ only for a subset $J_{*}\subseteq 2^{\{1,...,n\}}$, and then note that the orbit $\mathcal{O}$ of $\gapp$ through $\omega_{*}\in\pos$ is given by all those positive linear functionals $\omega$ such that $p^{j}(\omega)\neq 0$ if and only if $j\in J_{*}$.
In particular, every orbit $\mathcal{O}$ may be identified with the open interior of the positive cone in $\mathbb{R}^{m}$ with $m=\mathrm{card}(J_{*})$ (cardinality of $J_{*}$).

Similarly, if we fix a reference state $\rho_{*}\in\stsp$ such that $p^{j}(\rho_{*})\neq 0$ only for a subset $J_{*}\subseteq 2^{\{1,...,n\}}$ (i.e., a probability vector supported on $J_{*}$), we have that  the orbit $\mathcal{O}_{1}$ of $\gapp$ through $\rho_{*}\in\stsp$ is given by all those states $\rho$ such that $p^{j}(\rho)\neq 0$ if and only if $j\in J_{*}$.
In particular, every orbit $\mathcal{O}_{1}$ may be identified with the open interior of the $m$-simplex in $\mathbb{R}^{m}$ with $m=\mathrm{card}(J_{*})$.

The linear function $l_{\mathbf{a}}$ associated with $\mathbf{a}\in\appas$ reads
\be
l_{\mathbf{a}}\,=\,a_{j}\,p^{j}\,,
\ee
and we have that 
\be
l_{[[\mathbf{a},\mathbf{b}]]}\,=\,0\;\;\;\forall\,\,\mathbf{a},\mathbf{b}\,\,\in\appas
\ee
because $\appa$ is Abelian, and  
\be
l_{\{\mathbf{a},\mathbf{b}\}}\,=\,\sum_{j=1}^{n}\,a_{j}\,b_{j}\,p^{j}
\ee
because  $\appa$  is Abelian and its product operation is defined component-wise.
This means that the tensor field $\Lambda$ vanishes, while the tensor field $\mathcal{R}$ may be written as
\be
\mathcal{R}\,=\,\sum_{j=1}^{n}\,p^{j}\,\frac{\partial}{\partial p^{j}}\,\otimes\,\frac{\partial}{\partial p^{j}}.
\ee
It is then clear that the ``inverse'' $\Gg$ of $\mathcal{R}$ on the orbit $\mathcal{O}$   may be written as
\be
\Gg\,=\,\sum_{j\in J_{*}}\,\frac{1}{p^{j}}\,\mathrm{d}p^{j}\,\otimes\,\mathrm{d}p^{j},
\ee
and the pullback tensor 
\be
\Gg_{1}\,=\,i^{*}_{1+}\,\Gg,
\ee
on $\mathcal{O}_{1}$ coincides with the canonical Fisher--Rao metric tensor $\Gg_{FR}$ (see \cite{A-N-2000} (sec. 2.2)) on $\mathcal{O}_{1}$ when the latter is considered as the open interior of the  $m$-simplex in $\mathbb{R}^{m}$ with $m=\mathrm{card}(J_{*})$.

Quite provocatively, we may say that the Fisher--Rao metric tensor is a built-in feature of the $C^{*}$-algebraic approach to classical probability theory on finite sample spaces, and that, in this context, the natural object to start with is the \grit{contravariant} tensor field  $\mathcal{R}$ from which the Fisher--Rao metric tensor (a \grit{covariant} tensor field) may be recovered as explained above.

\section{From the Gelfand-Naimark-Segal Construction to Riemannian Geometries}\label{sec: from GNS to Jordan metric}

In this section, we will take inspiration from the reduction procedure adopted to define the Fubini--Study metric tensor on the manifold of pure quantum states (see \cite{E-M-M-2010,M-Z-2018}), as well as from Uhlmann's purification procedure adopted to define the Bures--Helstrom  metric tensor on the manifold of faithful quantum states (see \cite{D-U-1999,Jencova-2002,Uhlmann-1976,Uhlmann-1986,Uhlmann-1992,Uhlmann-2011}), to give a more appealing geometrical picture of the metric tensors on the manifolds of states on $\appa$ introduced in the previous section.
Essentially, we will build a geometrical picture that is somehow dual to the one presented before.
Indeed, in Section \ref{subsec: states}, the Riemannian manifold $(\mathcal{O}_{1},\Gg_{1})$ was thought of as the \grit{source} of a Riemannian immersion into the Riemannian manifold $(\mathcal{O},\Gg)$, while here, the Riemannian manifold $(\mathcal{O}_{1},\Gg_{1})$ will be shown to be the \grit{target} of a Riemannian submersion from an open submanifold of a suitably big sphere.

In order to develop our ideas,  we need to  briefly recall some aspects of the geometry of a complex Hilbert space $\hh$ as a real K\"{a}hler manifold (see \cite{C-L-M-1983,C-M-P-1990,E-M-M-2010}).
First of all,  every complex Hilbert space $\hh$  may always be thought of as a real, smooth Hilbert manifold (much of what we will say applies also to infinite-dimensional Hilbert spaces, but we will confine the discussion to the finite-dimensional case).
Indeed, we may always consider the realification  $\hh_{\mathbb{R}}$ as the model space.
The realification $\hh_{\mathbb{R}}$ is obtained by restricting linear combinations of elements in $\hh$ to have only real coefficients, and by defining a real Hilbert product as the real part of the complex Hilbert product on $\hh$.
In the following, we will write $\hh$ to denote both the complex Hilbert space and the real, smooth Hilbert manifold modeled on $\hh_{\mathbb{R}}$.
This should not induce confusion since the context will always clarify which is the mathematical aspect of $\hh$ we are referring to.
The tangent space $T_{\psi}\hh$ at $\psi\in\hh$ may be identified with $\hh$ itself because $\hh$ is a vector manifold.
Consequently, we may set
\be\label{eqn: symplectic form on hilbert space}
\Omega(X_{\psi},\,Y_{\psi})\,:=\,\frac{2}{\imath}\left(\langle X_{\psi},\,Y_{\psi}\rangle -\langle Y_{\psi},\,X_{\psi}\rangle\right),
\ee
and
\be\label{eqn: Riemannian metric on hilbert space}
\mathcal{E}(X_{\psi},\,Y_{\psi})\,:=\,2\left(\langle X_{\psi},\,Y_{\psi}\rangle + \langle Y_{\psi},\,X_{\psi}\rangle\right),
\ee
so that a direct computation shows that $\Omega$ is a symplectic form and that $\mathcal{E}$ is a Riemannian metric tensor.

Now, let us fix  a state $\rho$.
We denote by $\hh$ the GNS Hilbert space associated with $\rho$, by $\psi_{\mathbb{I}}$ the GNS vector associated with $\rho$, and with $\mathrm{r}$ the GNS representation of $\appa$ in $\bh$.
Every vector $\psi_{\mathbf{a}}\in\hh$   gives rise to a positive linear functional on $\appa$ given by 
\be
\omega_{\psi_{\mathbf{a}}}(\mathbf{c})\,:=\,\langle\psi_{\mathbf{a}}|\mathrm{r}(\mathbf{c})|\psi_{\mathbf{a}}\rangle\,,
\ee
in particular, if the vector is normalized, the associated positive linear functional is a state.

For every $\psi_{\mathbf{a}}\in\hh_{0}=\hh-\{\mathbf{0}\}$ we set 
\be
\widehat{\psi_{\mathbf{a}}}\,=\,\frac{\psi_{\mathbf{a}}}{\sqrt{\langle\psi_{\mathbf{a}}|\psi_{\mathbf{a}}\rangle}}
\ee
so that $\widehat{\psi_{\mathbf{a}}}$ is automatically on the unit sphere $\mathrm{S}_{1}$  in $\hh$, and every element in $\mathrm{S}_{1}$ may be written as $\widehat{\psi_{\mathbf{a}}}$ for some $\psi_{\mathbf{a}}$ with $\mathbf{a}\notin N_{\rho}$.
Note that the condition $\langle\psi_{\mathbf{a}}|\psi_{\mathbf{a}}\rangle=\rho(\mathbf{a}^{\dagger}\mathbf{a})=0$ means that $\mathbf{a}$ is in the ideal $N_{\rho}$, which means that $\psi_{\mathbf{a}}$ is the null vector in $\hh$.
Then, the map $\pi\colon\mathrm{S}_{1}\lra\stsp$ given by
\be\label{eqn: GNS projection map on states}
 \widehat{\psi_{\mathbf{a}}}\,\mapsto\,\pi(\widehat{\psi_{\mathbf{a}}})\,:=\,\rho_{\widehat{\psi_{\mathbf{a}}}}
\ee
is easily seen to be continuous with respect to the topology underlying the standard differential structure of $\mathrm{S}_{1}$ and the weak* topology on $\stsp$ (which in the finite-dimensional case coincides with the norm topology).
Furthermore, the image of $\mathrm{S}_{1}$ through $\pi$ consists of all those states $\rho_{\widehat{\psi_{\mathbf{a}}}}$ acting as
\be
\rho_{\widehat{\psi_{\mathbf{a}}}}(\mathbf{c})\,=\,\frac{\rho(\mathbf{a}^{\dagger}\,\mathbf{c}\,\mathbf{a})}{\rho(\mathbf{a}^{\dagger}\mathbf{a})}.
\ee
In particular, from this last equation we immediately see that the smooth homogeneous space $\mathcal{O}_{1}$ containing $\rho$ is in the image of $\mathrm{S}_{1}$ through $\pi$.

Now, in the finite-dimensional case, the Hilbert space $\hh$ is just the quotient $\appa/ N_{\rho}$, where $N_{\rho}$ is the Gel'fand ideal associated with $\rho$.
There is a natural projection map $\mathrm{pr}\colon\appa\lra\hh$, and this projection map is an open map because $\hh=\appa/ N_{\rho}$ is the quotient by a group action.
Consequently, the space
\be\label{eqn: GNS unfolding of positive functional}
\hh(\gapp)\,:=\,\left\{\psi\in\hh\;:\;\;\exists\,\gr\in\gapp\;|\;\;\psi\,=\,\mathrm{pr}(\gr)\right\}
\ee
is open in $\hh$ because it is the image $\mathrm{pr}(\gapp)$ of the open set $\gapp\subset\appa$, and if we set
\be
\mathrm{S}_{1}(\gapp)\,:=\,\hh(\gapp)\,\cap\,\mathrm{S}_{1},
\ee
we immediately conclude that $\mathrm{S}_{1}(\gapp)$ is an open submanifold of the unit sphere $\mathrm{S}_{1}$, and that, essentially by definition, every $\psi\in\mathrm{S}_{1}(\gapp)$ is such that there exists an invertible element $\gr\in\gapp$ such that  $\psi\,=\mathrm{pr}(\gr)\,=\,\widehat{\psi_{\gr}}$.
This means that the image $\pi(\mathrm{S}_{1}(\gapp))$ in $\stsp$ through the map $\pi$ defined in Equation \eqref{eqn: GNS projection map on states} coincides with the orbit $\mathcal{O}_{1}\subset\stsp$ and vice versa.
This is analogous to the correspondence between the positive octant of the sphere and the unit simplex, as exploited in \cite{A-J-L-S-2017}.


We note that there is a left action $\beta$ of $\gapp$  on the sphere $\mathrm{S}_{1}$ given by
\be\label{eqn: action og G on the GNS sphere}
(\gr,\widehat{\psi_{\mathbf{c}}})\,\mapsto\,\beta(\gr,\widehat{\psi_{\mathbf{c}}})\,=\,\frac{(\mathrm{r}(\gr))(\psi_{\mathbf{c}})}{\sqrt{\langle \psi_{\mathbf{c}}|\mathrm{r}(\gr^{\dagger}\,\gr)|\psi_{\mathbf{c}}\rangle}}\,\equiv\,\widehat{(\mathrm{r}(\gr))(\psi_{\mathbf{c}})}\,.
\ee
This action is smooth, and its fundamental vector fields $\Psi_{\mathbf{a}\mathbf{b}}$ are easily seen to be 
\be\label{eqn: fundamental vector fields of the action of G on the GNS sphere}
\Psi_{\mathbf{a}\mathbf{b}}(\widehat{\psi_{\mathbf{c}}})\,=\,\frac{1}{2}\left(\mathrm{r}(\mathbf{a}) +\imath\,\mathrm{r}(\mathbf{b})\right)\,(\widehat{\psi_{\mathbf{c}}}) - \frac{1}{2}\langle\widehat{\psi_{\mathbf{c}}}|\mathrm{r}(\mathbf{a})|\widehat{\psi_{\mathbf{c}}}\rangle \,\widehat{\psi_{\mathbf{c}}}\, ,
\ee
where we implicitly assumed that $\gr=\mathrm{e}^{\frac{1}{2}(\mathbf{a} + \imath\mathbf{b})}$ with $\mathbf{a}, \mathbf{b}\in\appas$.
Essentially by definition, this action preserves the open submanifold $\mathrm{S}_{1}(\gapp)$ of the unit sphere, and it is actually transitive on it as can be checked by direct inspection.
Consequently, the vector fields $\Psi_{\mathbf{a}\mathbf{b}}$ provide an overcomplete basis for the tangent space at each $\widehat{\psi_{\gr}}\in\mathrm{S}_{1}(\gapp)$.

\begin{Proposition}
The map $\pi\,\colon\;\;\mathrm{S}_{1}(\gapp)\,\lra\,\mathcal{O}_{1}$ obtained by restricting the map $\pi$ of Equation \eqref{eqn: GNS projection map on states} is a submersion.
\end{Proposition}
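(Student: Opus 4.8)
The plan is to establish surjectivity of the differential $\mathrm{d}\pi_{\widehat{\psi_{\gr}}}\colon T_{\widehat{\psi_{\gr}}}\mathrm{S}_{1}(\gapp)\lra T_{\pi(\widehat{\psi_{\gr}})}\mathcal{O}_{1}$ at every point, which (since $\pi$ is smooth, being the restriction of a smooth map between manifolds) is exactly the condition to be a submersion. The natural strategy is equivariance: I would first check that $\pi$ intertwines the action $\beta$ of $\gapp$ on $\mathrm{S}_{1}(\gapp)$ with the action $\Phi$ of $\gapp$ on $\mathcal{O}_{1}$, i.e. $\pi\circ\beta(\gr,\cdot)=\Phi(\gr,\cdot)\circ\pi$. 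This follows by a direct computation comparing Equation \eqref{eqn: action og G on the GNS sphere} with Equation \eqref{eqn: action of gapp on the states}: for $\widehat{\psi_{\mathbf{c}}}\in\mathrm{S}_{1}(\gapp)$ one has $\pi(\beta(\gr,\widehat{\psi_{\mathbf{c}}}))(\mathbf{d})=\langle\mathrm{r}(\gr)\psi_{\mathbf{c}}\,|\,\mathrm{r}(\mathbf{d})\,|\,\mathrm{r}(\gr)\psi_{\mathbf{c}}\rangle/\langle\mathrm{r}(\gr)\psi_{\mathbf{c}}|\mathrm{r}(\gr)\psi_{\mathbf{c}}\rangle$, which using the GNS relation $\langle\psi_{\mathbf{a}}|\psi_{\mathbf{b}}\rangle=\rho(\mathbf{a}^{\dagger}\mathbf{b})$ equals $\rho(\mathbf{c}^{\dagger}\gr^{\dagger}\mathbf{d}\gr\mathbf{c})/\rho(\mathbf{c}^{\dagger}\gr^{\dagger}\gr\mathbf{c})=\Phi(\gr,\pi(\widehat{\psi_{\mathbf{c}}}))(\mathbf{d})$.

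Once equivariance is in hand, differentiation at the identity of $\gapp$ shows that $\mathrm{d}\pi$ carries the fundamental vector field $\Psi_{\mathbf{a}\mathbf{b}}$ of $\beta$ to the fundamental vector field $\Upsilon_{\mathbf{a}\mathbf{b}}$ of $\Phi$; that is, $\Psi_{\mathbf{a}\mathbf{b}}$ is $\pi$-related to $\Upsilon_{\mathbf{a}\mathbf{b}}$ for all $\mathbf{a},\mathbf{b}\in\appas$. Indeed, both sides are the infinitesimal generators of the one-parameter families obtained from $\gr(t)=\mathrm{e}^{\frac{t}{2}(\mathbf{a}+\imath\mathbf{b})}$, and $\pi$ conjugates these families by the equivariance identity; one can also verify this directly by applying $\mathrm{d}\pi$ to the explicit formula \eqref{eqn: fundamental vector fields of the action of G on the GNS sphere}, recovering \eqref{eqn: fundamental vector field of PHI} via \eqref{eqn: action of fundamental vector fields on expectation value functions}. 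Now the key point is that, as recorded just before the statement, the vectors $\{\Psi_{\mathbf{a}\mathbf{b}}(\widehat{\psi_{\gr}})\}_{\mathbf{a},\mathbf{b}\in\appas}$ span $T_{\widehat{\psi_{\gr}}}\mathrm{S}_{1}(\gapp)$ (because $\beta$ is transitive on $\mathrm{S}_{1}(\gapp)$), while the vectors $\{\Upsilon_{\mathbf{a}\mathbf{b}}(\rho)\}_{\mathbf{a},\mathbf{b}\in\appas}$ span $T_{\rho}\mathcal{O}_{1}$ (because $\Phi$ is transitive on $\mathcal{O}_{1}$). Since $\mathrm{d}\pi_{\widehat{\psi_{\gr}}}(\Psi_{\mathbf{a}\mathbf{b}}(\widehat{\psi_{\gr}}))=\Upsilon_{\mathbf{a}\mathbf{b}}(\pi(\widehat{\psi_{\gr}}))$, the image of $\mathrm{d}\pi_{\widehat{\psi_{\gr}}}$ contains a spanning set of the target tangent space, hence $\mathrm{d}\pi_{\widehat{\psi_{\gr}}}$ is surjective. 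As this holds at every point of $\mathrm{S}_{1}(\gapp)$, the map $\pi$ is a submersion.

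The main obstacle I anticipate is not any single computation but rather the bookkeeping around base points: one must be careful that the fundamental-vector-field identity $\mathrm{d}\pi(\Psi_{\mathbf{a}\mathbf{b}})=\Upsilon_{\mathbf{a}\mathbf{b}}$ holds pointwise on all of $\mathrm{S}_{1}(\gapp)$ and not merely over the distinguished vector $\psi_{\mathbb{I}}$; this is exactly what the equivariance of $\pi$ guarantees, since every point of $\mathrm{S}_{1}(\gapp)$ is of the form $\widehat{\psi_{\gr}}=\beta(\gr,\psi_{\mathbb{I}})$ and $\pi$ conjugates the two $\gapp$-actions, so $\pi$-relatedness of fundamental vector fields propagates from one point along the orbit to all of $\mathrm{S}_{1}(\gapp)$. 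One should also note at the outset that $\pi$ does land in $\mathcal{O}_{1}$ (rather than a larger stratum of $\stsp$): this was already observed in the text, since $\pi(\mathrm{S}_{1}(\gapp))=\mathcal{O}_{1}$. With these two observations the argument is complete.
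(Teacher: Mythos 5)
Your proposal is correct and follows essentially the same route as the paper: both arguments reduce to showing that each fundamental vector field $\Psi_{\mathbf{a}\mathbf{b}}$ of $\beta$ is $\pi$-related to the fundamental vector field $\Upsilon_{\mathbf{a}\mathbf{b}}$ of the transitive action $\Phi$, and then invoke the fact that the $\Upsilon_{\mathbf{a}\mathbf{b}}$ span each tangent space of $\mathcal{O}_{1}$. The only (cosmetic) difference is that you first establish the finite equivariance $\pi\circ\beta_{\gr}=\Phi_{\gr}\circ\pi$ and then differentiate, whereas the paper performs the equivalent infinitesimal computation directly by differentiating $\pi^{*}\ev_{\mathbf{c}}$ along the flow of $\Psi_{\mathbf{a}\mathbf{b}}$.
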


\begin{proof}
To prove the proposition, we will show that 
\be\label{eqn: differential of pi}
T_{\widehat{\psi_{\gr}}}\pi(\Psi_{\mathbf{a}\mathbf{b}}(\widehat{\psi_{\gr}}))\,=\,\Upsilon_{\mathbf{a}\mathbf{b}}(\pi(\widehat{\psi_{\gr}})),
\ee
where $\Upsilon_{\mathbf{a}\mathbf{b}}$ is the fundamental vector field of the (transitive) action $\Phi$ of $\gapp$ on $\mathcal{O}_{1}$ (see Equation \eqref{eqn: fundamental vector field of PHI}),  which proves that $T_{\widehat{\psi_{\gr}}}\pi$ is surjective for every $\widehat{\psi_{\gr}}\in\mathrm{S}_{1}(\gapp)$.
The proof of Equation \eqref{eqn: differential of pi} is obtained by noting that
\be\label{eqn: pullback to GNS sphere of evaluation functions}
\pi^{*}\ev_{\mathbf{c}}(\widehat{\psi_{\gr}})\,=\,\ev_{\mathbf{c}}(\pi(\widehat{\psi_{\gr}}))\,=\,\rho_{\widehat{\psi_{\gr}}}(\mathbf{c})\,=\,\frac{\langle \psi_{\gr}|\mathrm{r}(\mathbf{c})|\psi_{\gr}\rangle}{\langle\psi_{\gr}|\psi_{\gr}\rangle}\,=\,\langle\widehat{\psi_{\gr}}|\mathrm{r}(\mathbf{c})|\widehat{\psi_{\gr}}\rangle,
\ee
and then directly computing
\be
\begin{split}
\frac{\mathrm{d}}{\mathrm{d} t}\,\left(\pi^{*}\ev_{\mathbf{c}}(\mathrm{Fl}^{\Psi_{\mathbf{a}\mathbf{b}}}_{t}(\widehat{\psi_{\gr}}))\right)_{t=0}&\,=\,\frac{\mathrm{d}}{\mathrm{d} t}\,\left(\frac{\langle \psi_{\gr}|\mathrm{r}(\mathrm{e}^{\frac{t}{2}(\mathbf{a} - \imath\mathbf{b})})\,\mathrm{r}(\mathbf{c})\,\mathrm{r}(\mathrm{e}^{\frac{t}{2}(\mathbf{a} + \imath\mathbf{b})})|\psi_{\gr}\rangle}{\langle \psi_{\gr}|\mathrm{r}(\mathrm{e}^{\frac{t}{2}(\mathbf{a} - \imath\mathbf{b})})\,\mathrm{r}(\mathrm{e}^{\frac{t}{2}(\mathbf{a} + \imath\mathbf{b})})|\psi_{\gr}\rangle}\right)_{t=0}\,=\, \\
& \\
&\,=\,\frac{\langle\psi_{\gr}|\mathrm{r}\left(\{\mathbf{a},\mathbf{c}\} + [[\mathbf{b},\mathbf{c}]]\right)|\psi_{\gr}\rangle}{\langle\psi_{\gr}|\psi_{\gr}\rangle} - \frac{\langle\psi_{\gr}|\mathrm{r}(\mathbf{c})|\psi_{\gr}\rangle}{\langle\psi_{\gr}|\psi_{\gr}\rangle}\,\frac{\langle\psi_{\gr}|\mathrm{r}(\mathbf{a})|\psi_{\gr}\rangle}{\langle\psi_{\gr}|\psi_{\gr}\rangle}\,=\,\\
& \\
&\,=\,\ev_{\{\mathbf{a},\mathbf{c}\}}(\pi(\widehat{\psi_{\gr}})) + \ev_{[[\mathbf{b},\mathbf{c}]]}(\pi(\widehat{\psi_{\gr}})) - \ev_{\mathbf{a}}(\pi(\widehat{\psi_{\gr}}))\,\ev_{\mathbf{c}}(\pi(\widehat{\psi_{\gr}}))\,.
\end{split}
\ee
\end{proof}

\begin{Remark}\label{rem: link between beta action and Phi action}
From this proposition, we conclude that every $\Psi_{\mathbf{a}\mathbf{b}}$ is $\pi$-related with the fundamental vector field $\Upsilon_{\mathbf{a}\mathbf{b}}=\mathbb{Y}_{\mathbf{a}} + \mathbb{X}_{\mathbf{b}}$ of the action $\Phi$ of $\gapp$ on $\mathcal{O}_{1}$ given in Equation \eqref{eqn: action of gapp on the states}, and thus the action of $\gapp$ on $\mathcal{O}_{1}$ may be seen as the projected shadow of the action of $\gapp$ on $\mathrm{S}_{1}(\gapp)$.
In particular, the same is true for the action of the unitary group $\mathscr{U}$ on $\mathcal{O}_{1}$.
Furthermore, the validity of Equation  \eqref{eqn: differential of pi} implies that the kernel of $T_{\widehat{\psi_{\gr}}}\pi$ coincides with the isotropy algebra $\mathfrak{g}_{\pi(\widehat{\psi_{\gr}})}^{\Phi}$ of the action $\Phi$ at $\pi(\widehat{\psi_{\gr}})$.
This is consistent with the fact that the equality
\be
\pi(\widehat{\psi_{\gr}})\,=\,\pi(\widehat{\psi_{\mathrm{h}}})
\ee
is equivalent to 
\be
\mathrm{h}\,=\,\mathrm{k}\,\gr
\ee
with $\mathrm{k}$ in the isotropy subgroup $\gapp_{\pi(\widehat{\psi_{\gr}})}$ of $\pi(\widehat{\psi_{\gr}})\in\mathcal{O}_{1}\subset\stsp$ with respect to $\Phi$, as can easily be checked.
\end{Remark}

On the unit sphere $\mathrm{S}_{1}$ there is the action of another relevant Lie group, namely, the Lie group $\mathscr{U}'\subset\bh$, which consists of unitary elements in the commutant $\appa'$ of $\mathrm{r}(\appa)$ in $\bh$.
The action of $\mathscr{U}'$ on $\mathrm{S}_{1}$ is just the restriction of its natural action on $\hh$, and this action is proper because both $\mathscr{U}'$ and $\mathrm{S}_{1}$ are compact (in the finite-dimensional case).
The fundamental vector fields of this action will be denoted by $\Xi_{\mathbf{b}}$, where $\mathbf{B}\in\appa'\subset\bh$ is skew-adjoint (i.e., $\mathbf{B}^{\dagger}=-\mathbf{B}$), and it is possible to prove that every $\Xi_{\mathbf{b}}$ commutes with every $\Psi_{\mathbf{a}\mathbf{b}}$.
Indeed, the flow  
$\mathrm{Fl}^{\Psi_{\mathbf{a}\mathbf{b}}}_{t}$ of $\Psi_{\mathbf{a}\mathbf{b}}$ on $\widehat{\psi_{\mathbf{c}}}$ is given by  (see \cite{A-M-R-1988} for the definition of the flow of a vector field)
\be 
\mathrm{Fl}^{\Psi_{\mathbf{a}\mathbf{b}}}_{t}(\widehat{\psi_{\mathbf{c}}})\,=\,\frac{(\mathrm{r}(\gr_{t}))(\psi_{\mathbf{c}})}{\sqrt{\langle \psi_{\mathbf{c}}|\mathrm{r}(\gr_{t}^{\dagger}\,\gr_{t})|\psi_{\mathbf{c}}\rangle}}\,
\ee
with $\gr_{t}=\mathrm{e}^{\frac{t}{2}(\mathbf{a} + \imath\mathbf{b})}$ (see Equation \eqref{eqn: action og G on the GNS sphere}), while the flow  $\mathrm{Fl}^{\Xi_{\mathbf{b}}}_{t}$ of $\Xi_{\mathbf{b}}$ on $\widehat{\psi_{\mathbf{c}}}$, by definition, is just 
\be
\mathrm{Fl}^{\Xi_{\mathbf{b}}}_{t}(\widehat{\psi_{\mathbf{c}}})\,=\,U_{t}'(\widehat{\psi_{\mathbf{c}}})
\ee
with $U_{t}'=\mathrm{e}^{t\mathbf{B}}$, and, since $U_{t}' \in\appa'$, it immediately follows that the flows of these vector fields commute, i.e., the vector fields themselves commute.


\begin{Proposition}\label{prop: quotient space of GNS sphere is the orbit throug the GNS vector}
The group $\mathscr{U}'$ acts freely and properly on $\mathrm{S}_{1}(\gapp)$, and there is a diffeomorphism between the quotient space $M=\mathrm{S}_{1}(\gapp)/\mathscr{U}'$ endowed with the canonical smooth structure coming from the free and proper action and the smooth manifold $\mathcal{O}_{1}$ endowed with the smooth structure  recalled in Section \ref{sec: states}.
\end{Proposition}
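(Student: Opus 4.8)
The plan is to treat the three parts of the statement — properness, freeness, and the diffeomorphism — separately, the last being where the genuine content lies.

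\textbf{Properness} comes for free: $\hh$ being finite-dimensional, $\appa'$ is a finite-dimensional $C^{*}$-algebra and its unitary group $\mathscr{U}'$ is compact, and any action of a compact group on a Hausdorff space is proper. I would only pause to note that $\mathscr{U}'$ does restrict to an action on $\mathrm{S}_{1}(\gapp)$ and not merely on $\mathrm{S}_{1}$: this is because $\mathrm{S}_{1}(\gapp)$ is precisely the set of unit vectors that are cyclic for $\mathrm{r}(\appa)$ — equivalently, $\pi^{-1}(\mathcal{O}_{1})$ — and a unitary in $\appa'$ sends cyclic vectors to cyclic vectors.

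For \textbf{freeness}, the key observation is that for $\gr\in\gapp$ the vector $\psi_{\gr}=\mathrm{r}(\gr)\psi_{\mathbb{I}}$ is again cyclic for $\mathrm{r}(\appa)$, since $\mathrm{r}(\gr)$ is invertible in $\mathcal{B}(\hh)$ (with inverse $\mathrm{r}(\gr^{-1})$), so that $\mathrm{r}(\appa)\psi_{\gr}=\mathrm{r}(\appa\gr)\psi_{\mathbb{I}}=\mathrm{r}(\appa)\psi_{\mathbb{I}}$; hence $\psi_{\gr}$, and its normalization $\widehat{\psi_{\gr}}$, is separating for the commutant $\appa'$. Then, if $U'\in\mathscr{U}'$ fixes $\widehat{\psi_{\gr}}$, unitarity gives $(U'-\mathbb{I})\psi_{\gr}=0$, and since $U'-\mathbb{I}\in\appa'$ the separating property forces $U'=\mathbb{I}$. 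With the action now known to be free and proper, the quotient manifold theorem yields a canonical smooth structure on $M=\mathrm{S}_{1}(\gapp)/\mathscr{U}'$ for which the projection $q\colon\mathrm{S}_{1}(\gapp)\to M$ is a submersion (indeed a principal $\mathscr{U}'$-bundle).

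The heart of the argument is the identification of $M$ with $\mathcal{O}_{1}$. First I would check that $\pi$ is $\mathscr{U}'$-invariant: for $U'\in\mathscr{U}'$, $\pi(U'\widehat{\psi})(\mathbf{c})=\langle\widehat{\psi}|(U')^{\dagger}\mathrm{r}(\mathbf{c})U'|\widehat{\psi}\rangle=\langle\widehat{\psi}|\mathrm{r}(\mathbf{c})|\widehat{\psi}\rangle$, because $U'$ commutes with $\mathrm{r}(\mathbf{c})$. Thus $\pi$ factors as $\pi=\bar{\pi}\circ q$ with $\bar{\pi}\colon M\to\mathcal{O}_{1}$ smooth, and since $\pi$ is a submersion (by the Proposition just proved) and $q$ is a submersion, so is $\bar{\pi}$. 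Surjectivity of $\bar{\pi}$ is the already-established equality $\pi(\mathrm{S}_{1}(\gapp))=\mathcal{O}_{1}$. For injectivity — the step I expect to be the main obstacle — suppose $\pi(\widehat{\psi_{\gr}})=\pi(\widehat{\psi_{\mathrm{h}}})$; then $\widehat{\psi_{\gr}}$ and $\widehat{\psi_{\mathrm{h}}}$ are unit cyclic vectors for $\mathrm{r}(\appa)$ inducing the same state, so $\mathrm{r}(\mathbf{a})\widehat{\psi_{\gr}}\mapsto\mathrm{r}(\mathbf{a})\widehat{\psi_{\mathrm{h}}}$ is a well-defined isometry (because $\|\mathrm{r}(\mathbf{a})\widehat{\psi_{\gr}}\|^{2}=\langle\widehat{\psi_{\gr}}|\mathrm{r}(\mathbf{a}^{\dagger}\mathbf{a})|\widehat{\psi_{\gr}}\rangle$ is unchanged when $\gr$ is replaced by $\mathrm{h}$) which, by cyclicity of both vectors, extends to a unitary $W\in\mathcal{B}(\hh)$; one then checks $W\mathrm{r}(\mathbf{b})=\mathrm{r}(\mathbf{b})W$ for all $\mathbf{b}$, so $W\in\mathscr{U}'$, and $W\widehat{\psi_{\gr}}=\widehat{\psi_{\mathrm{h}}}$, i.e. the two points are in the same $\mathscr{U}'$-orbit and have the same image under $q$. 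This is exactly the uniqueness part of the GNS theorem, and it matches the description of the fibres of $\pi$ given in Remark~\ref{rem: link between beta action and Phi action}. Finally, a bijective submersion is automatically a diffeomorphism — it is an open continuous bijection, hence a homeomorphism, so source and target have equal dimension, so the submersion is also an immersion, hence a local and then a global diffeomorphism — and since the smooth structure carried here by $\mathcal{O}_{1}$ is precisely the one recalled in Section~\ref{sec: states} (the one making $\pi$ smooth), this yields the asserted diffeomorphism. The only nonformal step in the whole argument is the injectivity of $\bar{\pi}$, i.e. that the fibres of $\pi$ are exactly the $\mathscr{U}'$-orbits, where the uniqueness of the GNS representation does the essential work.
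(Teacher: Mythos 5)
Your proof is correct, and on the first two parts (properness via compactness of $\mathscr{U}'$, freeness via the cyclicity of each $\psi_{\gr}$ and hence its separating property for the commutant) it coincides with the paper's argument essentially word for word. Where you genuinely diverge is in establishing the diffeomorphism $M\cong\mathcal{O}_{1}$. The paper exploits the group-action structure: it lets $\gapp$ act transitively on both $M$ (via the projected action $\widetilde{\Phi}$, using that the $\Psi_{\mathbf{a}\mathbf{b}}$ commute with the $\Xi_{\mathbf{b}}$) and on $\mathcal{O}_{1}$ (via $\Phi$), shows that the isotropy subgroups at $[\widehat{\psi_{\gr}}]$ and at $\pi(\widehat{\psi_{\gr}})$ coincide, and concludes that the two homogeneous spaces $\gapp/\gapp_{x}$ are diffeomorphic. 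You instead factor $\pi=\bar{\pi}\circ q$ and show directly that $\bar{\pi}$ is a bijective submersion, hence a diffeomorphism. Both arguments ultimately rest on the same nontrivial fact -- that the fibres of $\pi$ are exactly the $\mathscr{U}'$-orbits, i.e.\ that two cyclic unit vectors inducing the same state are related by a unitary in the commutant -- but the paper relegates the hard direction of this to the ``as can easily be checked'' of Remark~\ref{rem: link between beta action and Phi action}, whereas you prove it explicitly by constructing the intertwining unitary $W$ (the uniqueness part of the GNS theorem). In that respect your write-up is the more self-contained of the two; what the paper's route buys is the $\gapp$-equivariance of the diffeomorphism for free, which it uses implicitly in the later identification of the $\gapp$-actions. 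The one point you share with the paper without full justification is the identification $\mathrm{S}_{1}(\gapp)=\pi^{-1}(\mathcal{O}_{1})$ (equivalently, that the unit vectors of $\hh(\gapp)$ are exactly the cyclic ones over $\mathcal{O}_{1}$), which the paper also asserts rather than proves, so this is not a gap relative to the source.
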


\begin{proof}
First of all, a direct computation shows that 
\be
\pi(U'(\widehat{\psi_{\gr}}))\,=\,\pi(\widehat{\psi_{\gr}})
\ee
for every $U'\in\mathscr{U}'$ and every $\widehat{\psi_{\gr}}\in\mathrm{S}_{1}(\gapp)$, and thus the action of $\mathscr{U}'$ preserves $\mathrm{S}_{1}(\gapp)$ because  $\mathrm{S}_{1}(\gapp)$ is the preimage of $\mathcal{O}_{1}$ through $\pi$.
Then, we note that the action of $\mathscr{U}'$ in $\mathrm{S}_{1}(\gapp)$ is proper  for the group being compact.
To show that it is free, note that, by construction of the GNS representation, the vector $\psi_{\mathbb{I}}$ is cyclic for $\mathrm{r}(\appa)$, and it is separating for $\appa'$  because of a standard result (see \cite{B-R-1987-1} (prop. 2.5.3)).
Consequently,  if $\mathbf{a}'\in\appa'$ is such that
\be
\mathbf{a}'(\psi_{\mathbb{I}})\,=\,\mathbf{0},
\ee
then $\mathbf{a}'$ is the zero element in $\bh$.
This means that the isotropy group $\mathscr{U}_{\psi_{\mathbb{I}}}'$ of $\psi_{\mathbb{I}}$ with respect to the action of $\mathscr{U}'\subset\appa'$ is the trivial group consisting only of the identity operator on $\hh$.
Furthermore, the isotropy group $\mathscr{U}'_{\psi_{\gr}}$ of every $\psi_{\gr}\in\mathrm{S}_{1}(\gapp)$ is the trivial subgroup.
Indeed, every $\psi_{\gr}\in\mathrm{S}_{1}(\gapp)$ is cyclic for $\mathrm{r}(\appa)$ because 
\be
\psi_{\mathbf{c}}\,=\,(\mathrm{r}(\mathbf{c}\gr^{-1}))(\psi_{\gr})\;\;\;\;\forall\;\psi_{\mathbf{c}}\in\hh ,
\ee
and thus $\psi_{\gr}$ is separating for $\appa'$ (see \cite{B-R-1987-1} (prop. 2.5.3)) and we may proceed as before.
From this, we have that the quotient space $M=\mathrm{S}_{1}(\gapp)/\mathscr{U}'$ is a smooth manifold.

The group $\gapp$ acts on $M$ by means of the projection of the action $\beta$ on $\mathrm{S}_{1}(\gapp)$ introduced before because the fundamental vector fields $\Psi_{\mathbf{a}\mathbf{b}}$ generating $\beta$ commute with the fundamental vector fields $\Xi_{\mathbf{b}}$ generating the action of $\mathscr{U}'$ giving rise to $M$.
Furthermore, this action is transitive on $M$ because $\beta$ is transitive on $\mathrm{S}_{1}(\gapp)$.
We denote this action by $\widetilde{\Phi}$, and we have
\be
\widetilde{\Phi_{\mathrm{h}}}([\widehat{\psi_{\gr}}])\,=\,\left[\beta_{\mathrm{h}}(\widehat{\psi_{\gr}})\right].
\ee
From this it follows that $\mathrm{h}$ is in the isotropy group of $[\widehat{\psi_{\gr}}]$ with respect to $\widetilde{\Phi}$ if and only if  (recall that the isotropy subgroup of $[\widehat{\psi_{\gr}}]$ with respect to $\widetilde{\Phi}$ is the set of all group elements in $\mathscr{U}'$ leaving $[\widehat{\psi_{\gr}}]$ unaltered)
\be
\left[\beta_{\mathrm{h}}(\widehat{\psi_{\gr}})\right]\,=\,\left[\widehat{\psi_{\gr}}\right],
\ee
that is, if and only if there exists a unitary element $U'_{\mathrm{h}}$ such that 
\be
\beta_{\mathrm{h}}(\widehat{\psi_{\gr}})\,=\,U'_{\mathrm{h}}(\widehat{\psi_{\gr}}).
\ee
Clearly, this means that 
\be
\pi\left(\beta_{\mathrm{h}}(\widehat{\psi_{\gr}})\right)\,=\,\pi\left(U'_{\mathrm{h}}(\widehat{\psi_{\gr}})\right)\,=\,\pi(\widehat{\psi_{\gr}}),
\ee
and thus $\mathrm{h}$ is in the isotropy subgroup $\gapp_{\pi(\widehat{\psi_{\gr}})}$ of $\pi(\widehat{\psi_{\gr}})\in\mathcal{O}_{1}\subset\stsp$ with respect to $\Phi$ (see Remark \ref{rem: link between beta action and Phi action}).
Therefore, $[\widehat{\psi_{\gr}}]\in M$ and $\pi(\widehat{\psi_{\gr}})\in\mathcal{O}_{1}$ have the same isotropy group for every $\widehat{\psi_{\gr}}\in\mathrm{S}_{1}(\gapp)$, and this implies that there is a diffeomorphism between $M$ endowed with the differential structure and the $\gapp$-action coming from the quotient procedure on $\mathrm{S}_{1}(\gapp)$ and the manifold $\mathcal{O}_{1}\subset\stsp$ endowed with the smooth structure and the $\gapp$-action recalled in Section \ref{sec: states}.

Note, however, that this diffeomorphism does not extend to the boundaries where it is only a homeomorphism.

\end{proof}

At this point,  we consider the metric tensor $g$ on $\mathrm{S}_{1}(\gapp)$  which is   the pullback     of the Euclidean tensor $\mathcal{E}$ on $\hh$ by means of the canonical immersion of $\mathrm{S}_{1}(\gapp)$ into $\hh$ given by the identification map, and we will prove that the projection map $\pi\colon\mathrm{S}_{1}(\gapp)\lra\mathcal{O}_{1}$ is a Riemannian submersion between $(\mathrm{S}_{1}(\gapp),g)$ and $(\mathcal{O}_{1},\Gg_{1})$.
First of all, we note that the vector fields $\Xi_{\mathbf{b}}$ generating the action of $\mathscr{U}'$ on $\mathrm{S}_{1}(\gapp)$ span the kernel of $\pi$ because of Proposition \ref{prop: quotient space of GNS sphere is the orbit throug the GNS vector}.
Then, a direct computation shows that 
\be
g(\Psi_{\mathbf{a}\mathbf{0}},\Xi_{\mathbf{b}})\,=\,0
\ee
for every $\mathbf{a}\in\appas$ and every $\mathbf{B}$ in the Lie algebra of $\mathscr{U}'$.
Indeed, we have
\be
g_{\widehat{\psi_{\gr}}}\left(\Psi_{\mathbf{a}\mathbf{0}}(\widehat{\psi_{\gr}}),\Xi_{\mathbf{b}}(\widehat{\psi_{\gr}})\right)\,=\,2\left(\langle\widehat{\psi_{\gr}}|r(\mathbf{a})\mathbf{B}|\widehat{\psi_{\gr}}\rangle + \langle\widehat{\psi_{\gr}}|\mathbf{B}^{\dagger}\,\mathrm{r}(\mathbf{a})|\widehat{\psi_{\gr}}\rangle\right)\,=\,0
\ee
because $\mathbf{B}$ is skew-adjoint and commutes with $\mathrm{r}(\mathbf{a})$.
This means that the linear span of the $\Psi_{\mathbf{a}\mathbf{0}}$'s is in the orthogonal complement of the vertical distribution.
However, $\pi$ being  a submersion, we have
\be
\mathrm{dim}(\mathrm{S}_{1}(\gapp))\,=\,\mathrm{dim}(\mathcal{O}_{1}) + \mathrm{dim}(\mathscr{U}'),
\ee
and since the $\Psi_{\mathbf{a}\mathbf{0}}$'s are $\pi$-related with the gradient vector fields $\mathbb{Y}_{\mathbf{a}}$  (see Remark \ref{rem: link between beta action and Phi action}), and these vector fields provide an overcomplete basis of tangent vectors at each point in $\mathcal{O}_{1}$, we conclude that the linear span of the $\Psi_{\mathbf{a}\mathbf{0}}$'s generates the whole orthogonal complement of the vertical distribution at each point in $\mathrm{S}_{1}(\gapp)$.
What is left to prove is that 
\be\label{eqn: pi is a Riemannian submersion}
g \left(\Psi_{\mathbf{a}\mathbf{0}},\,\Psi_{b0} \right)\,=\,\Gg_{1}\left(\mathbb{Y}_{\mathbf{a}},\,\mathbb{Y}_{\mathbf{b}}\right)
\ee
for all $\mathbf{a},\mathbf{b}\in\appas$.
For this purpose, recalling first Equation \eqref{eqn: fundamental vector fields of the action of G on the GNS sphere}, then Equation \eqref{eqn: pullback to GNS sphere of evaluation functions}, and then Equation \eqref{eqn: action of metric tensor on vector fields on states}, we have
\be
\begin{split}
g_{\widehat{\psi_{\gr}}}\left(\Psi_{\mathbf{a}\mathbf{0}}(\widehat{\psi_{\gr}}),\Psi_{b0}(\widehat{\psi_{\gr}})\right)&\,=\,\frac{1}{2}\left(\langle\widehat{\psi_{\gr}}|r(\mathbf{a})\mathrm{r}(\mathbf{b}) + r(\mathbf{b})\mathrm{r}(\mathbf{a}) |\widehat{\psi_{\gr}}\rangle\right) -  \langle\widehat{\psi_{\gr}}|\mathrm{r}(\mathbf{a})|\widehat{\psi_{\gr}}\rangle\langle\widehat{\psi_{\gr}}|\mathrm{r}(\mathbf{b})|\widehat{\psi_{\gr}}\rangle\,=\,\\
&\,=\, \rho_{\widehat{\psi_{\gr}}}\left(\{\mathbf{a},\,\mathbf{b}\}\right) -\rho_{\widehat{\psi_{\gr}}}\left(\mathbf{a}\right)\,\rho_{\widehat{\psi_{\gr}}}\left(\mathbf{b}\right)\,=\, \\
&\,=\,  \ev_{\{\mathbf{a},\mathbf{b}\}}(\pi(\widehat{\psi_{\gr}})) - \ev_{\mathbf{a}}(\pi(\widehat{\psi_{\gr}}))\,\ev_{\mathbf{b}}(\pi(\widehat{\psi_{\gr}})) \,=\,\\
&\,=\,\left(\Gg_{1}\right)_{\pi(\widehat{\psi_{\gr}})}\,\left(\mathbb{Y}_{\mathbf{a}}(\pi(\widehat{\psi_{\gr}})),\,\mathbb{Y}_{\mathbf{b}}(\pi(\widehat{\psi_{\gr}}))\right),
\end{split}
\ee
which proves the validity of Equation \eqref{eqn: pi is a Riemannian submersion}, which implies that $\pi$ is a Riemannian submersion between $(\mathrm{S}_{1}(\gapp),g)$ and $(\mathcal{O}_{1},\Gg_{1})$ as claimed.

The fact that $\pi$ is a Riemannian submersion implies that every geodesic of $\Gg_{1}$ on $\mathcal{O}_{1}$ may be written as the projection of a geodesic of $g$ on $\mathrm{S}_{1}(\gapp)$ having initial tangent vector in the horizontal distribution.
Therefore, since $\mathrm{S}_{1}(\gapp)$ is an open submanifold of the unit sphere in $\hh$, and since $g$ is the restriction of four times the round metric on $\mathrm{S}_{1}$, we have that the explicit expression of the geodesic $\gamma $ of $g$ on $\mathrm{S}_{1}(\gapp)$ starting at $\widehat{\psi_{\gr}}$ with initial (constant) tangent vector $\phi\neq\mathbf{0}$ is given by
\be
\gamma (t)\,=\,\cos\left(|\phi|t\right)\,\widehat{\psi_{\gr}}  + \sin\left(|\phi| t\right)\frac{\phi}{|\phi|},
\ee 
where $|\phi|^{2}=\langle\phi|\phi\rangle$.
The tangent vector $\phi$ is horizontal if and only if there exists $\mathbf{a}\in\appas$ such that $\phi=\Psi_{\mathbf{a}\mathbf{0}}(\widehat{\psi_{\gr}})$, and it is different from the null vector if and only if $\mathbf{a}$ is not in the Gel'fand ideal generated by $\rho_{\widehat{\psi_{\gr}}}=\pi(\widehat{\psi_{\gr}})$.
Therefore, assuming $\mathbf{a}$  not to be in the Gel'fand ideal generated by $\rho_{\widehat{\psi_{\gr}}}=\pi(\widehat{\psi_{\gr}})$, a direct computation shows that the geodesic $\sigma$ starting at $\rho_{\gr}\in\mathcal{O}_{1}$ with initial tangent vector $\mathbb{Y}_{\mathbf{a}}(\rho_{\gr})$ reads
\be\label{eqn: geodesics}
\sigma(t)\,=\,\cos^{2}(\mathbf{N}_{\gr,a}\,t)\,\rho_{\gr} + \frac{\sin^{2}(\mathbf{N}_{\gr,a}\,t)}{\mathbf{N}_{\gr,a}^{2}}\,\rho_{a_{\gr}} + \frac{\sin (2\mathbf{N}_{\gr,a}\,t)}{2\,\mathbf{N}_{\gr,a} }\,\{\rho_{a_{\gr}}\}\,,
\ee 
where we have set 
\be
\begin{split}
\rho_{\gr}&\,:=\,\rho_{\widehat{\psi_{\gr}}} , \\
\mathbf{a}_{\gr}&\,:=\, \mathbf{a} - \rho_{\gr}(\mathbf{a})\,\mathbb{I}, \\
\rho_{a_{\gr}}(\cdot)&\,:=\,\rho_{\gr}( \mathbf{a}_{\gr} \,(\cdot)\, \mathbf{a}_{\gr} ), \\
\{\rho_{a_{\gr}}\}(\cdot)&\,:=\,\rho_{\gr}\left(\{\mathbf{a}_{\gr},\cdot\}\right) \\
\mathbf{N}_{\gr,a}^{2}&\,:=\,\rho_{\gr}(\mathbf{a}^{2}) - (\rho_{\gr}(\mathbf{a}))^{2}.
\end{split}
\ee 
In general, this geodesic ``leaves'' $\mathcal{O}_{1}$ remaining in $\stsp$, and after some time it returns in $\mathcal{O}_{1}$ essentially because the geodesic $\gamma$ of which $\sigma$ is the projection is a great circle on a sphere.
A case in which $\mathcal{O}_{1}$ is geodesically complete is when $\appa$ is the algebra of   linear operators on a finite-dimensional, complex Hilbert space and $\mathcal{O}_{1}$ is the (compact) manifold  of pure states.
As we will see below, in this case $\Gg_{1}$ corresponds to the Fubini--Study metric tensor.
Note that, when $\appa=\mathbb{C}^{n}$, Equation \eqref{eqn: geodesics} gives an explicit form for the geodesics of the Fisher--Rao metric tensor.

\begin{Remark}
Note that the procedure applied here to $\mathrm{S}_{1}(\gapp)$ may be adapted in the obvious way to the open submanifold $\hh(\gapp)$ of $\hh$ introduced in Equation \eqref{eqn: GNS unfolding of positive functional}.
The result is that we obtain a Riemannian submersion between $(\hh(\gapp), \mathcal{E})$ (where $\mathcal{E}$ is the Euclidean metric tensor on $\hh$ given in Equation \eqref{eqn: Riemannian metric on hilbert space}) and $(\mathcal{O},\Gg)$ where $\mathcal{O}$ is the orbit of $\gapp$ in the space of positive linear functionals $\pos$ containing the reference state $\rho$ thought of as an element of $\pos$.
Accordingly, it is possible to obtain an explicit expression also for the geodesics of $\Gg$.

\end{Remark}

\section{The Fubini--Study  Metric Tensor}\label{sec: FB}

We will now explicitly perform the construction presented above in the case where $\appa$ is the algebra $\bh$ of  linear operators on the finite-dimensional, complex Hilbert space $\hh$, and we consider the orbit $\mathcal{O}_{1}$ of pure states.
In doing this, we will essentially recover the standard construction of the diffeomorphism of the complex projective space $\mathbb{CP}(\hh)$ associated with $\hh$ with the manifold $\mathcal{O}_{1}$  of pure states, and we will see that the Riemannian metric $\Gg_{1}$ on $\mathcal{O}_{1}\cong\mathbb{CP}(\hh)$ is (a multiple of) the Fubini--Study metric tensor.

First of all, let us recall that the space of pure states  on $\bh$ is the space of rank one projectors on $\hh$  (this is true only in the finite-dimensional case; if  $\hh$ is infinite-dimensional, then the space of rank one projectors on $\hh$ is the space of \grit{normal} pure states), that is, a pure state $\rho_{\psi}$ on $\bh$ may always be written as 
\be
\rho_{\psi}\,=\,\frac{|\psi\rangle\langle\psi|}{\langle\psi|\psi\rangle}
\ee
for some non-zero vector $\psi\in\hh$.

Let us fix a normalized vector $\psi\in\hh$, and let us introduce an orthonormal basis $\{e_{j}\}_{j=1,...,\mathrm{dim}(\hh)}$ in $\hh$ such that $e_{1}=\psi$.
Then, the Gel'fand ideal $N_{\rho_{\psi}}$ of the reference state $\rho_{\psi}$ is easily seen to be given by all those linear operators $\mathbf{a}$ on $\hh$ that can be written as 
\be
\mathbf{a}\,=\,\sum_{k\neq 1}\,a_{jk}\,|e_{j}\rangle\langle e_{k}|\,.
\ee
From this, we obtain that the GNS Hilbert space associated with $\rho_{\psi}$ can be identified with $\hh$ itself, and thus the GNS representation of $\bh$ on the GNS Hilbert space may be identified with $\bh$ itself.
Moreover, the GNS representation is irreducible (as it must be because we are considering the GNS representation associated with a pure state) and its commutant is given by the multiples of the identity operator on $\hh$.

In the case we are considering, it is immediate to check that the open submanifold $\hh(\gapp)$ defined in Equation \eqref{eqn: GNS unfolding of positive functional} coincides with $\hh_{0}\,=\,\hh-\{\mathbf{0}\}$, and  we conclude that 
\be
\mathrm{S}_{1}(\gapp)\,=\,\hh(\gapp)\cap\mathrm{S}_{1}\,=\,\hh_{0}\cap\mathrm{S}_{1}\,=\,\mathrm{S}_{1}\,.
\ee
The unitary group $\mathscr{U}'$ of the commutant of $\appa=\bh$ is just the action of the phase group $U(1)$ consisting of elements of the form $\mathrm{e}^{\imath \theta}\mathbb{I}$, with $\theta\in\mathbb{R}$.
Therefore, in the case at hand, the quotient space $\mathrm{S}_{1}(\gapp)/\mathscr{U}'\cong\mathcal{O}_{1}$ appearing in the general construction presented in the previous section is just $\mathrm{S}_{1}/U(1)$, which is precisely the complex projective space associated with $\hh$, and  we obtain the well-known diffeomorphism between the manifold $\mathcal{O}_{1}$ of pure states on $\bh$ thought of as rank one projectors with the complex projective space $\mathbb{CP}(\hh)$.
Under this isomorphism, the action of the unitary group $\mathscr{U}=\mathcal{U}(\hh)$ on $\mathcal{O}_{1}$ coincides with the canonical action of the unitary group on the complex projective space, and this is enough to conclude that the pullback to $\mathbb{CP}(\hh)$ of the Riemannian metric $\Gg_{1}$ on $\mathcal{O}_{1}$ is a multiple of the Fubini--Study metric tensors.
Indeed, we know that $\Gg_{1}$ is invariant with respect to the action of the unitary group on $\mathcal{O}_{1}$ (see Section \ref{subsec: states}), and thus its pullback on $\mathbb{CP}(\hh)$ will be invariant with respect to the canonical action of the unitary group on the complex projective space, and this forces this metric to be a multiple of the Fubini--Study metric tensor, since the latter as a metric of a symmetric space is characterized by that property, see for instance \cite{Jost-2017}.

\section{The  Bures--Helstrom  Metric Tensor}\label{subsec: helstrom metric}

In this subsection, we will explore the case where $\appa$ is  again the algebra $\bh$ of linear operators on the finite-dimensional, complex Hilbert space $\hh$, but the orbit $\mathcal{O}_{1}$ is the orbit of faithful states.
As will be clear, in this case we obtain Uhlmann's construction of the Bures--Helstrom  metric tensor 
(see \cite{Uhlmann-1976,Uhlmann-1986,Uhlmann-1992,Uhlmann-2011}).
Sometimes, this metric tensor is also called the Bures metric, or the Quantum Fisher Information Matrix.

First of all, recall that the Hilbert space trace $\Tr(\cdot)$ gives an isomorphism between $\appa$ and its dual $\appa^{*}$.
Essentially, every $\xi\in\appa^{*}$ may be identified with an element in $\appa$, denoted again by $\xi$ with an evident abuse of notation, such that
\be
\xi(\mathbf{a})\,=\,\Tr(\xi\,\mathbf{a}) \;\;\;\forall\,\,\,\mathbf{a}\,\,\in\,\appa .
\ee
In view of the literature on the quantum information of finite-level systems, in the remainder of this subsection, we will always maintain a \textit{bipolar} attitude and think of $\xi$ as either an element of $\appa$ or of $\appa^{*}$, hoping that no serious confusion arises.
Accordingly, the vector space $\stav$ is also thought of as the space $\appas$ of self-adjoint elements in $\appa$, $\pos$ is also thought of as the space of positive elements in $\appa$, $\stsp$   is also thought of as the space of density operators   in $\appa$.

Now, we fix the reference state $\tau$ to be the maximally mixed state associated with the element $\frac{\mathbb{I}}{n}$, where $n=\mathrm{dim}(\hh)$.
Since the reference state is faithful, then its Gel'fand ideal contains only the zero element in the algebra $\appa=\bh$, and thus the vector space underlying the GNS Hilbert space is $\bh$ itself.
Therefore, the Hilbert product in the GNS Hilbert space $\hh_{\tau}$ reads
\be
\langle \psi_{\mathbf{a}}|\psi_{\mathbf{b}}\rangle\,=\,\tau(\mathbf{a}^{\dagger}\,\mathbf{b})\,=\,\frac{1}{n}\mathrm{Tr}(\mathbf{a}^{\dagger}\,\mathbf{b}),
\ee
and we conclude that the GNS Hilbert space is essentially the Hilbert space of Hilbert--Schmidt operators on $\hh$.
Then, we easily obtain that the open submanifold $\hh(\gapp)$ defined in Equation \eqref{eqn: GNS unfolding of positive functional} coincides with $\gapp$ itself, and thus $\mathrm{S}_{1}(\gapp)=\hh(\gapp)\cap\mathrm{S}_{1}$ coincides with the set of all the invertible elements in the algebra satisfying the normalization condition
\be\label{eqn: normalization of GNS Hilbert space of invertible density operators}
\frac{1}{n}\,\mathrm{Tr}(\gr^{\dagger}\,\gr)\,=\,1\,.
\ee
The GNS representation of $\bh$ coincides with the left action of $\bh$ on itself, and its commutant coincides with $\bh$ acting by means of the right action on itself.
Consequently, if $\rho$ is a faithful state on $\bh$, that is, an invertible, positive operator on $\hh$ with unit trace, we have that all the vectors $\widehat{\psi_{\gr}}\in\mathrm{S}_{1}(\gapp)$ such that their projections $\pi(\widehat{\psi_{\gr}})$ coincide with $\rho$ may be written as $\gr\,\mathbf{u}$, where $\gr\in\gapp$ satisfies the normalization condition given in Equation \eqref{eqn: normalization of GNS Hilbert space of invertible density operators}, and $\mathbf{u}$ is a unitary element in $\gapp$, that is, a unitary operator on $\hh$.
Comparing what we have just obtained with the results in \cite{Uhlmann-1976,Uhlmann-1986,Uhlmann-1992,Uhlmann-2011},  it follows that the general construction presented in Section \ref{sec: from GNS to Jordan metric} essentially reduces to Uhlmann's purification procedure used to define the Bures--Helstrom metric tensor.
However, we will now give a more explicit proof of the equivalence of $\Gg_{1}$ with the Bures--Helstrom  metric tensor, which will also stress the fact that some of the computational difficulties usually associated with the expression of the  Bures--Helstrom metric tensor (as explicitly stated for instance in \cite{B-Z-2006} below equation 9.43) may be attributed to a particular realization of the tangent space at each $T_{\rho}\mathcal{O}_{1}$, which is not well-suited for the Bures--Helstrom  metric tensor.

To better appreciate this instance, we recall that, in the standard approach to the definition of the Bures--Helstrom  metric tensor, the manifold $\stsp_{+}$ of faithful states on $\hh$ is identified with the manifold of invertible density operators, the tangent space $T_{\rho}\stsp_{+}$ is identified with the affine hyperplane $\appas^{0}$ of self-adjoint elements in $\appa=\bh$ with vanishing trace.
This realization of $T_{\rho}\stsp_{+}$ is clearly different from the one used here in terms of the gradient vector fields, and we will now analyze their relation.
The  gradient vector fields on $\stav$ allow to identify a tangent vector   at $\xi\in\stav$ with  an element in $\appas$ by means of  
\be
Y_{\mathbf{a}}(\xi)\,=\,\{\xi,\,\mathbf{a}\},
\ee
where $\xi$ is thought of as an element of $\appas$.
It is worth noting that, when we consider a state $\rho$, the tangent vector $Y_{\mathbf{a}}(\rho)$  provides a geometrical version of the Symmetric Logarithmic Derivative at $\rho$ widely used in quantum estimation theory  \cite{Helstrom-1967,Helstrom-1968,Helstrom-1969,Paris-2009,Suzuki-2019}.
On the other hand, since $\stav\cong\appas$, we may also define a constant vector field associated with every $\mathbf{a}\in\appas$ by setting
\be\label{eqn: constant vector field on positive operators}
Z_{\mathbf{a}}(\xi)\,=\,\mathbf{a}\,.
\ee
If we restrict our considerations to the open submanifold $\pos_{+}\subset\stav$ of faithful positive linear functionals,  we may relate gradient  vector fields with constant vector fields as follows.
First of all, recall that every $\omega\in\pos_{+}$ is identified with an invertible element in $\appas$, and we may write
\be\label{eqn: anticommutator as superoperator}
\mathrm{A}_{\omega}(\mathbf{a})\,:=\,\{\omega,\,\mathbf{a}\}\,=\,\frac{1}{2}\,\left(L_{\omega} + R_{\omega}\right)(\mathbf{a}),
\ee
where $L_{\omega}\colon\appas\lra\appas$ is the linear operator given by $L_{\omega}(\mathbf{a}):=\omega\,\mathbf{a}$, and $R_{\omega}\colon\appas\lra\appas$ is the linear operator given by $R_{\omega}(\mathbf{a}):=\mathbf{a}\,\omega$.
Clearly, both $L_{\omega}$ and $R_{\omega}$ are positive, invertible linear operators on $\appas$ because $\omega$ is a positive, invertible operator on $\hh$, and thus $\mathrm{A}_{\omega}$ also is an invertible linear operator.
It clearly  holds
\be
\mathbf{a}\,=\,\mathrm{A}_{\omega}^{-1}\,\left(\mathrm{A}_{\omega}(\mathbf{a})\right)\,=\,\mathrm{A}_{\omega}^{-1}\,\left(\{\omega,\,\mathbf{a}\}\right),
\ee
and thus 
\be
\begin{split}\label{eqn: relation between gradient and constant vector fields on bh}
Z_{\mathbf{a}}(\omega)&\,=\,Y_{\mathrm{A}_{\omega}^{-1}(a)}(\omega) \\
Y_{\mathbf{a}}(\omega)&\,=\,Z_{\mathrm{A}_{\omega}(a)}(\omega)
\end{split}
\ee
at every faithful positive linear functional $\omega\in\pos_{+}$.
Furthermore, the gradient vector fields on the manifold $\stsp_{+}\subset\pos_{+}$ of faithful states allow to identify a tangent vector at $\rho\in\stsp_{+}$ with an element in $\appas$ by means of  
\be\label{eqn: gradient tangent vectors on states as self-adjoint operators}
\mathbb{Y}_{\mathbf{a}}(\rho)\,=\,\{\rho,\,\mathbf{a}\}  - \Tr(\rho\mathbf{a})\,\rho.
\ee
However, since $\stsp_{+}$ is the submanifold of $\pos_{+}$ determined by the inverse image of $1$ with respect to the linear function $l^{+}_{\mathbb{I}}$, and since a tangent vector at a point in $\pos_{+}$ may be identified with a self-adjoint element in $\appas$, we obtain the identification of $T_{\rho}\stsp_{+}$ with the linear subspace $\appas^{0}\,\subset\,\appas$ consisting of traceless elements mentioned before.
Specifically, every constant vector field $Z_{\mathbf{a}}$ in Equation \eqref{eqn: constant vector field on positive operators} is tangent to $\stsp_{+}$ whenever  $\mathbf{a}\in\appas$ is such that $\Tr(\mathbf{a})\,=\,0$.
Consequently, there is a vector field $\mathbb{Z}_{\mathbf{a}}$ on $\stsp_{+}$ to which $Z_{\mathbf{a}}$ is $\mathrm{i}_{1+}$-related.

Taking  $\rho\in\stsp_{+}$,  the tangent vectors $\mathbf{a},\mathbf{b}\in \appas^{0}\cong T_{\rho}\stsp_{+}$, and $\gr\in\pi^{-1}(\rho)$, the Bures--Helstrom  metric tensor $\Gg_{BH}$ may be defined by (see \cite{Dittmann-1993} (Equation (3)))
\be
(\Gg_{BH})_{\rho}(\mathbf{a},\,\mathbf{b})\,:=\,\inf\,\left\{ \Tr(\mathbf{A}^{\dagger}\,\mathbf{B})\,\,|\;\mathbf{A},\mathbf{B}\in\,T_\gr \mathrm{S}_{1}(\gapp)\,,\;T_{\gr}\pi(\mathbf{A})\,=\,\mathbf{a},\,\,\,T_{\gr}\pi(\mathbf{B})\,=\,\mathbf{b}\,\right\}\,.
\ee
Then, it is possible to prove  that (see \cite{Dittmann-1993,Dittmann-1995,Uhlmann-1992}), given arbitrary tangent vectors $\mathbf{a},\mathbf{b}\in \appas^{0}\cong  T_{\rho}\stsp_{+}$, the Bures--Helstrom metric tensor acts as
\be\label{eqn: Bures-Uhlmann metric tensor on density operators}
(\Gg_{BH})_{\rho}(\mathbf{a},\,\mathbf{b})\,=\,\Tr\left(\mathbf{a}\,\mathrm{A}_{\rho}^{-1}(\mathbf{b})\right)
\ee
where $\mathrm{A}_{\rho}^{-1}$ is the inverse of $\mathrm{A}_{\rho}$ (see Equation \eqref{eqn: anticommutator as superoperator}), which is the invertible (because $\rho$ is invertible) linear operator on $\bh$ given by $\mathrm{A}_{\rho}(\mathbf{b})\,:=\frac{1}{2}\left(\rho\,\mathbf{b} \,+\,\mathbf{b}\,\rho\right)$.
Note that the expression of $\Gg_{BU}$ given in Equation \eqref{eqn: Bures-Uhlmann metric tensor on density operators} crucially depends on the identification of the tangent space $T_{\rho}\stsp_{+}$ at $\rho\in\stsp_{+}$ with the space of traceless, self-adjoint elements in $\bh$.
That is, $\Gg_{BH}$ is expressed in terms of the basis of vector fields $\mathbb{Z}_{\mathbf{a}}$ on $\stsp_{+}$ that are $\mathrm{i}_{1+}$-related with the constant vector fields $Z_{\mathbf{a}}$ on $\pos_{+}$ defined in Equation \eqref{eqn: constant vector field on positive operators}, and, in this way, it becomes necessary to  find the explicit form of the operator $\mathrm{A}_{\rho}^{-1}$ at every $\rho$ making the explicit computation of the action of the metric tensor not straightforward.
On the other hand, the metric tensor $\Gg_{1}$ in Equation \eqref{eqn: action of metric tensor on vector fields on states} is ``visualized'' in terms of the gradient vector fields, and, with respect to these vector fields, its explicit expression is very easy to compute.
We will now show that $\Gg_{BH}$ and $\Gg_{1}$ coincide because we have
\be\label{eqn: anti-commutator metric is Bures-Uhlmann metric}
\begin{split}
\left(\Gg_{1}(\mathbb{Z}_{\mathbf{a}},\,\mathbb{Z}_{\mathbf{b}})\right)(\rho)& \,=\, (\mathrm{i}_{1+}^{*}\Gg)_{\rho}\left(\mathbb{Z}_{\mathbf{a}}(\rho),\,\mathbb{Z}_{\mathbf{b}}(\rho)\right)  \\
&\,=\,\Gg_{\rho}\left(T_{\rho}\mathrm{i}_{1+}(\mathbb{Z}_{\mathbf{a}}(\rho)),T_{\rho}\mathrm{i}_{1+}(\mathbb{Z}_{\mathbf{b}}(\rho))\right)  \\ 
&\,=\,\Gg_{\rho}(Z_{\mathbf{a}}(\rho),\,Z_{\mathbf{b}}(\rho)) \\ 
&\,=\,\Gg_{\rho}\left(Y_{\mathrm{A}_{\rho}^{-1}(a)}(\rho),\,Y_{\mathrm{A}_{\rho}^{-1}(b)}(\rho)\right)  \\
&\,=\,\Tr\left(\rho\,\{\mathrm{A}_{\rho}^{-1}(\mathbf{a}),\mathrm{A}_{\rho}^{-1}(\mathbf{b})\}\right)\\
&\,=\,\Tr\left(\{\rho,\mathrm{A}_{\rho}^{-1}(\mathbf{a})\}\,\mathrm{A}_{\rho}^{-1}(\mathbf{b})\right) \\
&\,=\,\Tr\left(\mathbf{a}\,\,\mathrm{A}_{\rho}^{-1}(\mathbf{b})\right) \\ 
&\,=\, (\Gg_{BH})_{\rho}(\mathbf{a},\,\mathbf{b})\,,
\end{split}
\ee
where we used the first equality in Equation \eqref{eqn: relation between gradient and constant vector fields on bh} in the fourth equality.
From  Equation \eqref{eqn: anti-commutator metric is Bures-Uhlmann metric}, we conclude that $\Gg_{1}=\mathrm{i}^{*}_{1+}\Gg$ is precisely the  Bures--Helstrom  metric tensor  as claimed.
This shows that with respect to the gradient vector fields, the explicit expression of the Bures--Helstrom metric is relatively easy to compute.
The fact that we no longer have to find the explicit expression of the operator $\mathrm{A}_{\rho}^{-1}$ at every $\rho$  is due to the fact that we work with gradient vector fields, and we believe that this instance should be interpreted   as an argument in favor of using the gradient vector   because these vector fields better express the geometrical properties of the manifold of quantum states.

\section{Concluding Remarks}\label{sec: conclusions}

In the context of quantum information theory, it is well-known that there is an infinite number of metric tensors on the manifold of faithful quantum states satisfying a property, which is the quantum  analog of the monotonicity under Markov maps characterizing the Fisher--Rao metric tensor (up to a constant factor) in the classical case.
The act of choosing which one of these metric tensors to use is thus an additional freedom that the quantum framework provides.

In this work, we presented a geometrical description of one of these  quantum metric tensors, the so-called Bures--Helstrom metric tensor, which is rooted in the $C^{*}$-algebraic nature of the space of quantum states.
Indeed, the theoretical framework of $C^{*}$-algebras allows to deal with  classical probability distributions and quantum states ``at the same time'' because both of them can be realized as concrete realizations of the space of states on suitable $C^{*}$-algebras, and from this point of view,  the algebraic structures on the algebra may be exploited to investigate the geometrical properties of the space of states.
We focused on the finite-dimensional case, and we studied the geometrical structures associated with the symmetric part (Jordan product) of the associative product of the algebra, and the result is the definition of Riemannian metric tensors on submanifolds of states.
In particular, we obtain that the Jordan product determines the Fisher--Rao metric tensor in the classical case, the Fubini--Study metric tensor in the case of pure quantum states, and the Bures--Helstrom metric tensor in the case of faithful quantum states, thus providing a theoretical framework in which all these seemingly different Riemannian metric tensors actually appear as different realizations of the ``same'' conceptual entity.

Finally, we want to mention that the geometrical picture outlined in this work heavily relies on the Jordan product of $\appas$.
As it is known, any associative algebra $A$ over $\mathbb{R}$ gives rise to a Jordan algebra $A_{J}$ with Jordan product $\odot$ given by $a\odot b:=\frac{1}{2}(a\cdot b + b\cdot a)$, where $\cdot$ is the associative product in $A$  (see \cite{A-S-2001,A-S-2003,J-vN-W-1934}).
Consequently, it would be interesting to understand if and how it is possible to build a similar picture for the space of states of a Jordan algebra associated with an associative algebra that is not a $C^{*}$-algebra.
In particular, ``natural candidates'' would be the many types of Geometric Algebras \cite{C-A-2007,D-L-2003}.

\section*{Acknowledgements}

F. M. C. would like to   thank   F. Di Cosmo,   A. Ibort, and     G. Marmo for stimulating and instructive discussions.

\addcontentsline{toc}{section}{References}

\end{document}